\newtheorem{theorem}{Theorem}
\def\BibTeX{{\rm B\kern-.05em{\sc i\kern-.025em b}\kern-.08em
    T\kern-.1667em\lower.7ex\hbox{E}\kern-.125emX}}
\begin{document}
\title{Texture Matching GAN for CT Image Enhancement}
\author{Madhuri Nagare, Gregery T. Buzzard, \IEEEmembership{Senior Member, IEEE,} and  Charles A. Bouman, \IEEEmembership{Fellow, IEEE}
\thanks{This paragraph of the first footnote will contain the date on which
you submitted your paper for reviews. M.~Nagare and C.A.~Bouman were partially supported by GE Healthcare, Waukesha, WI, USA.  C.A.~Bouman was partially supported by the Showalter Trust.  G.T.~Buzzard was partially supported by NSF CCF-1763896.}
\thanks{M.~Nagare and C.A.~Bouman are with the 
Elmore Family School of Electrical and Computer Engineering, Purdue University, West
Lafayette, IN, USA (e-mail: mnagare@purdue.edu, bouman@purdue.edu).}
\thanks{G.T.~Buzzard
is with the Department of Mathematics, Purdue University, West Lafayette,
IN, USA (email: buzzard@purdue.edu).}}

\maketitle

\begin{abstract}
Deep neural networks (DNN) are commonly used to denoise and sharpen X-ray computed tomography (CT) images with the goal of reducing patient X-ray dosage while maintaining reconstruction quality.
However, naive application of DNN-based methods can result in image texture that is undesirable in clinical applications.
Alternatively, generative adversarial network (GAN) based methods can produce appropriate texture, but naive application of GANs can introduce inaccurate or even unreal image detail.
In this paper, we propose a texture matching generative adversarial network (TMGAN) that enhances CT images while generating an image texture that can be matched to a target texture.
We use parallel generators to separate anatomical features from the generated texture, which allows the GAN to be trained to match the desired texture without directly affecting the underlying CT image.
We demonstrate that TMGAN generates enhanced image quality while also producing image texture that is desirable for clinical application. 

\end{abstract} 
\begin{IEEEkeywords}
Low-dose CT, texture matching, denoising, sharpening, generative adversarial network
\end{IEEEkeywords}

\section{Introduction}
\label{sec:introduction}
X-ray computed tomography (CT) is one of the most widely used 3D medical imaging modalities, with recent progress on reconstruction methods resulting in reduced noise and artifacts while improving resolution and quality \cite{textbookJiang}.
In particular, noise reduction and image sharpening methods can be used to reduce X-ray dosage while maintaining image quality.   
However, the true measure of quality for a medical CT reconstruction method is its ability to improve diagnostic accuracy.  

Radiologists regard the texture of CT images after enhancement as critically important for diagnosis \cite{waite2019analysis}. 
In fact, high quality texture provides important visual cues in decision making for radiologists \cite{texture_medical_diag}.
Since radiologists are typically familiar with the noise texture of filtered back projection (FBP) \cite{fdk}, this texture is often described as desirable \cite{DLIR_testm,pan2020impact}.
Quantitatively, \cite{racine2020task} found that reducing noise while maintaining a texture like that of FBP led to better lesion detection than noise reduction that changed the texture.   

Approaches to control CT image texture include  
\cite{gatys2016image, gatys2015texture}, which synthesize high quality texture by matching the statistics of generated texture to a target.
However, employing this method while denoising or sharpening a CT image requires separation of true anatomy from texture.  This separation can be done using morphological component analysis (MCA) \cite{starck2005image} as in \cite{zeng2020magnetic, cheng2016image}.
However, MCA requires dictionary learning for both the object and texture components, which is computationally expensive.

From \cite{kulathilake2022review}, adaptive methods to preserve textural information include patch-based approaches using spatial similarity \cite{nlm, NLMpaper, bm3d}, which employ parameters to control image smoothness.  Alternatively,
\cite{MBIRGE} used a tuned a prior distribution in an iterative reconstruction \cite{beister2012iterative} to produce desirable texture. 
However, these methods tend to be computationally expensive. 

Deep neural networks (DNNs) are currently among the most popular methods for CT image enhancement \cite{Kang2017, SRCNN2}.~\cite{BRpaper} and~\cite{NPSFPaper} proposed a novel loss  function and a method to create training pairs, respectively, so that a DNN can be trained to preserve the texture.
Another approach to improving texture using DNNs is to use a generative adversarial network (GAN) \cite{Wolterink2017}, with the generator output (i.e., the denoised or sharpened image) as an input to the discriminator \cite{Yang2018a, SACNN, GANsuperRes}.
While these approaches can produce more realistic texture, they do not allow optimization to produce a particular desirable target texture.
Also, in GAN based methods, since the underlying CT image is not separated from the texture, the discriminator could encourage the addition of inaccurate or even unreal image detail known as hallucinations~\cite{gan_hallc}.
Xian \textit{et al.} \cite{xian2018texturegan} used a conditional GAN to produce a target texture in natural images when an image sketch is provided. 
However, this method cannot be applied directly to our problem since sketches of anatomy are not generally available.

In this paper, we propose the Texture Matching GAN (TMGAN), which denoises and/or sharpens CT images while simultaneously matching the generated texture to a distribution of target textures.
The methods of TMGAN build on our earlier research presented in~\cite{BRpaper}.
A novel aspect of TMGAN is that it separates the texture from the underlying clean CT image by adding two independent noise samples to the same ground truth image and processing them with a Siamese network \cite{siamese} (generator) to produce two conditionally independent estimates. 
We take the difference of these two estimates to separate the texture component from the underlying clean CT image.
This allows the GAN to be trained without the risk of generating false image detail or hallucinations.

\begin{figure*}[t]
     \centering
     \includegraphics[scale=0.6]{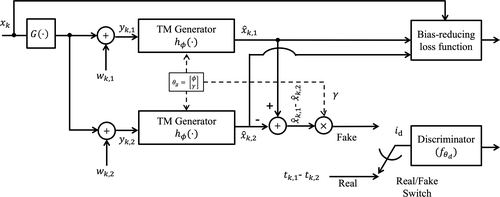}
     \caption{Network architecture for Texture Matching GAN (TMGAN).}
     \label{fig_network}
 \end{figure*}

Our main contributions is a TMGAN architecture that:
\begin{itemize}
\item Denoises or sharpens a CT image while generating image texture that matches a desired target texture;
\item Separates noise texture from the underlying clean CT image by subtracting two conditionally independent estimates with the same ground truth;
\item Uses a novel bias reduction method to reduce bias in the estimated image;
\end{itemize}
We demonstrate the effectiveness of the TMGAN approach on simulated and experimentally measured CT data and show both quantitatively and qualitatively that TMGAN yields better texture quality than existing approaches.

\section{Problem Formulation}\label{problem}
Let $Y$ be an observed image from which we aim to recover the true image $X$. We assume the following forward model,
\begin{align}\label{formodel}
    Y = G(X) + W,
\end{align}
where $W$ is noise and $G(\cdot)$ models other deformations in $Y$; for example, in sharpening applications, $G(\cdot)$ is a blurring function. We also assume that the noise $W$ is zero mean and independent of $X$, so that $\mathbb{E}[W|X] = 0.$  We then seek to estimate $X$ as $\hat{X} = h(Y)$.  

To model texture, we define the estimate $\hat{X}$ as a sum of three components,
\begin{align}
    \hat{X} = X + B_X + \delta_X,
\end{align}
where $B_X = \mathbb{E}[\hat X| X]-X$ is the bias in the estimate and $\delta_X = \hat X - \mathbb{E}[\hat X| X]$ is the estimation noise, or texture, with $\mathbb{E}[\delta_X| X]=0$. 
Note that if $X$ is known, then $B_X$ is deterministic, whereas $\delta_X$ is still a random variable and a function of $X$.

Our goal is to train a conditional GAN whose generator estimates $X$ from $Y$ and that produces estimation noise (or texture) matching the distribution of a specified distribution of textures.  That is, the generator gives $\hat{X} = h(Y)$, and the resulting $\delta_X$ should be distributed to match user-provided independent samples of the target noise texture, $T$.  
However, it is difficult to design a loss function for this task because we observe the ground truth, $X$, and the estimate, $\hat{X}$, but never directly observe the estimation noise, $\delta_X$.

The key to training TMGAN is to add two independent noise instances $W_1$ and $W_2$ using a single ground truth $X$ in \eqref{formodel} to form $Y_1$ and $Y_2$.
These are used to give estimated images $\hat{X}_1=h(Y_1)$ and $\hat{X}_2=h(Y_2)$, 
which satisfy
\begin{align}\label{diffeq}
    \hat X_1 - \hat X_2 &=  (X + B_X + \delta_{X_1}) - (X + B_X +     \delta_{X_2}) \nonumber \\
        &=  \delta_{X_1} - \delta_{X_2},
\end{align}
where $\delta_{X_1}$ and $\delta_{X_2}$ are conditionally independent and identically distributed (i.i.d.) given $X$. 
Since our goal is to match $\delta_{X_1}$ to the distribution of $T$, \eqref{diffeq} implies that $\hat X_1 - \hat X_2$ should match the distribution of $T_1 - T_2$, where $T_1$ and $T_2$ are two i.i.d. samples of the target texture, and this is easy to achieve in a GAN framework.  

The question remaining is whether matching $\delta_{X_1} - \delta_{X_2}$ to $T_1 - T_2$ will match the distributions of $\delta_X$ and $T$.  For Gaussian $T$, we answer in the affirmative in the following theorem.  

\begin{theorem}
Let $\delta_1 - \delta_2 \sim \mathcal{N}(0, 2\sigma ^2)$ and $\delta_1, \delta_2$ be real valued i.i.d. random variables with a distribution that is symmetric about $0$.
Then $\delta_1 \sim \mathcal{N}(0, \sigma^2).$ 
\end{theorem}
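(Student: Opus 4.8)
The plan is to work with characteristic functions and to exploit the symmetry hypothesis to reduce the problem to an elementary square-root identity. Let $\phi(t) = \mathbb{E}[e^{it\delta_1}]$ denote the common characteristic function of $\delta_1$ and $\delta_2$. My first step is to record two consequences of the hypotheses. Independence of $\delta_1$ and $\delta_2$ gives the factorization $\phi_{\delta_1 - \delta_2}(t) = \phi(t)\,\phi(-t)$, since the characteristic function of $-\delta_2$ is $\phi(-t)$. Symmetry about $0$ forces $\phi$ to be real-valued, because the imaginary part $\mathbb{E}[\sin(t\delta_1)]$ vanishes, and hence also even, so that $\phi(-t) = \phi(t)$. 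Combining these, I obtain the clean identity $\phi_{\delta_1 - \delta_2}(t) = \phi(t)^2$.

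Next I would invoke the hypothesis $\delta_1 - \delta_2 \sim \mathcal{N}(0, 2\sigma^2)$, whose characteristic function is $e^{-\sigma^2 t^2}$. Equating the two expressions yields the pointwise relation $\phi(t)^2 = e^{-\sigma^2 t^2}$ for all $t \in \mathbb{R}$. In particular $\phi$ never vanishes, so for each fixed $t$ we may write $\phi(t) = \pm\, e^{-\sigma^2 t^2/2}$, with the sign to be determined.

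The one genuinely delicate point is precisely this sign ambiguity, since a priori the chosen sign could vary with $t$. To resolve it I would consider the ratio $\psi(t) = \phi(t)\, e^{\sigma^2 t^2/2}$, which is continuous (characteristic functions are always continuous), takes values in the discrete set $\{+1, -1\}$, and satisfies $\psi(0) = \phi(0) = 1$. A continuous function mapping the connected domain $\mathbb{R}$ into the disconnected set $\{+1, -1\}$ must be constant by the intermediate value theorem, so $\psi \equiv 1$ and therefore $\phi(t) = e^{-\sigma^2 t^2/2}$ for every $t$.

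Finally, $e^{-\sigma^2 t^2/2}$ is exactly the characteristic function of $\mathcal{N}(0, \sigma^2)$, so by the uniqueness theorem for characteristic functions I conclude $\delta_1 \sim \mathcal{N}(0, \sigma^2)$, as claimed. I expect the sign-selection step to be the main, and essentially the only, obstacle; everything else is a direct computation. It is worth emphasizing that the symmetry assumption is what makes $\phi$ real-valued and thereby enables this elementary argument, in place of the much heavier Cramér decomposition theorem that would be required to conclude normality of the summands without such an assumption.
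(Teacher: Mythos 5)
Your proof is correct and follows essentially the same route as the paper's: factor the characteristic function of $\delta_1-\delta_2$ using independence, use symmetry to turn the product into a square, take the square root, and resolve the sign ambiguity via continuity before invoking uniqueness of characteristic functions. Your treatment of the sign issue (the connectedness argument for $\psi(t)=\phi(t)e^{\sigma^2 t^2/2}$) is just a more explicit version of the paper's one-line continuity remark.
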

\begin{proof}
Let $Z = \delta_1 - \delta_2$. Since $Z$, $\delta_1$, and $\delta_2$ are real valued, all of their characteristic functions exist.  
Since $Z \sim \mathcal{N} (0, 2 \sigma^2)$, its characteristic function is given by
\begin{align*}
    \phi_Z(t) = \mathbb{E}[ e^{jt Z} ] = e^{-\sigma^2t^2} \ .
\end{align*}
Since $\delta_1$ and $\delta_2$ are independent, we have that
\begin{align*}
\phi_{Z}(t) &= \mathbb{E}[e^{jt(\delta_1 - \delta_2)}] \\
    &= \mathbb{E}[e^{jt\delta_1}e^{-jt\delta_2}] \\
    &= \mathbb{E}[e^{jt\delta_1}]\mathbb{E}[e^{-jt\delta_2}] 
\end{align*}
Since $\delta_1$ and $\delta_2$ are symmetric about $0$, we can remove the negative sign in the final expected value.  Since $\delta_1$ and $\delta_2$ are i.i.d., the 2 expected values are the same, hence 
\begin{align*}
\phi_{Z}(t) 
    &= \mathbb{E}[e^{jt\delta_1}]^2 \ .
\end{align*}
Taking square roots yields
\begin{align*}
\mathbb{E}[e^{jt\delta_1}]  
    &= \pm e^{-\frac{1}{2} \sigma^2t^2}
\end{align*}
Since the characteristic function is always continuous, the choice of $\pm$ is independent of $t$.  Since the left hand side is 1 when $t=0$, we see that the characteristic function of $\delta_1$ is $e^{-\frac{1}{2} \sigma^2t^2}$.  By uniqueness of the characteristic function, this means that $\delta_1 \sim \mathcal{N} (0, \sigma^2)$, hence likewise for $\delta_2$.
\end{proof}

\section{Texture Matching GAN}\label{tmganarch}

\subsection{TMGAN Architecture}

Fig.~\ref{fig_network} shows the network architecture for TMGAN, where lower-case letters denote samples of the aforementioned random variables.
The central component is the TM generator, $h_\phi(\cdot)$, a neural network parameterized by the vector $\phi$, which uses noisy and distorted input, $y$, to estimate the original image as $\hat{x} = h_\phi(y)$. 
As described below, we optimize $h_\phi$ not only to minimize mean squared error (MSE) but also to produce a texture that statistically matches the provided training or target texture samples.
Since the training texture samples may have a different amplitude than the estimation texture, we use a parameter, $\gamma$, to account for the potential difference in scaling.
This $\gamma$ parameter can be learned or set manually.

For the $k^{th}$ ground truth image, we first apply the deformation $G(\cdot)$ (Gaussian blur for sharpening and the identity for denoising) and then generate two conditionally independent inputs $y_{k,1}$ and $y_{k,2}$ by adding independent noise samples $w_{k, 1}$ and $w_{k, 2}$.
From these two inputs, the TM generator produces two estimates, $\hat{x}_{k,1} = h_\phi(y_{k,1})$ and $\hat{x}_{k,2} = h_\phi(y_{k,2})$. 
Following \eqref{diffeq}, we take the scaled difference $\gamma (\hat{x}_{k,1} -\hat{x}_{k,2})$ to get a sample of the fake texture difference, $\gamma (\delta_{x1} - \delta_{x2})$. 
We also generate samples of the real texture difference $t_{k,1} -t_{k,2}$ using sample images of the target texture. 

\subsection{TMGAN Training}
To train the TM generator, we need to promote accurate image estimation and match generated texture differences to target texture differences.   
To match texture differences, we use a GAN framework \cite{GANorig} in which the discriminator network is trained to differentiate scaled fake texture differences from real texture differences. 
We model the discriminator as a function $f_{\theta_d}(\cdot)\in (0,1)$, with parameters $\theta_d$, and we interpret $f_{\theta_d}$ as the probability that a texture difference sample is real.
\RestyleAlgo{ruled}
\begin{algorithm}[t!]
\SetKwComment{Comment}{/* }{ */}
\SetKwInOut{Input}{Input}%

\caption{Training Pseudocode for TMGAN}\label{training}
\Input{ $N:$ total number of generator updates,\newline
 $T_d:$ threshold for the discriminator loss,\newline
$N_d:$ maximum number of discriminator updates per generator update\newline}

$\theta_g, \theta_d \gets$ Initialize network parameters.

\For{$n \gets 0 \text{ to } N$}{
$n_d \gets 0$

\While{$d(\theta_g, \theta_d) > T_d \textbf{ AND } n_d < N_d$}{

 $\theta_d \gets$ Update $\theta_d$ with one iteration of Adam optimizer to minimize $d(\theta_g, \theta_d)$

 $n_d \gets n_d+1$}
 $\theta_g \gets$ Update $\theta_g$ with one iteration of Adam optimizer to minimize $g(\theta_g, \theta_d)$}
\end{algorithm}

Using this notation, the discriminator is trained by minimizing the Binary Cross-Entropy (BiCE) \cite{GANorig} loss function with respect to $\theta_d$:
\begin{align}\label{disccost}
    d(\theta_g, \theta_d) = & -\frac{1}{K}  \sum_{k=1}^K  \big[
    \log{\big(f_{\theta_d}(t_{k,1} - t_{k,2}) \big)} \nonumber\\
    &+ \log{\big(1-f_{\theta_d}(\gamma ( h_\phi(y_{k,1}) - h_\phi(y_{k,2}) )\big) \big]} \ ,
\end{align}
where  $\theta_g = [\phi, \gamma]$ and $K$ is the number of training samples. 

Unlike previous GAN based methods for CT image enhancement, the TMGAN discriminator works only on the texture part, which avoids the risk of possible addition of fake detail, known
as hallucinations, in the enhanced images.

Since our goal is to match ground truth while maximizing texture quality, we need two loss terms for the generator, one to promote desirable textures and a second to minimize MSE.  
The MSE term incorporates the bias-reducing loss function of \cite{BRpaper}, which demonstrated that the bias-reducing loss function yields better structural detail in denoised images than was obtained with the standard MSE loss. 

Using this strategy, the TMGAN generator loss function is 
\begin{align}\label{gencost}
    g(\theta_g, \theta_d) =& \frac{1}{K} \sum_{k=1}^{K} 
    \bigg[ -\lambda\log{ \big( \gamma ( h_\phi(y_{k,1}) - h_\phi(y_{k,2}) ) \big) }\nonumber\\&+\frac{1}{2\sigma^2}\bigg(\|\hat{z}_{k,1} - x_k\|^2+\|\hat{z}_{k,2} - x_k\|^2\bigg) \bigg], 
\end{align}
where 
\begin{align*}
{\hat z}_{k,1} &= \alpha \hat{x}_{k,1} + (1-\alpha) \hat{x}_{k,2}  \\
{\hat z}_{k,2} &= (1-\alpha) \hat{x}_{k,1} + \alpha \hat{x}_{k,2} \ 
\end{align*}
form the basis of the bias-reducing loss, $\lambda$ weights texture loss versus fit to data, and $\sigma$ is roughly the standard deviation in estimating $x_k$. 
We set $\alpha=0.5$ for maximum bias reduction for denoising \cite{BRpaper} and $\alpha=1.0$ to reduce aliasing in the sharpening case.
Note that each of the two branches of the TM Generator network share the same parameters, so this can be treated as a Siamese network \cite{siamese} for training.

\begin{figure}[t]
    \centering
    \includegraphics[scale=0.37]{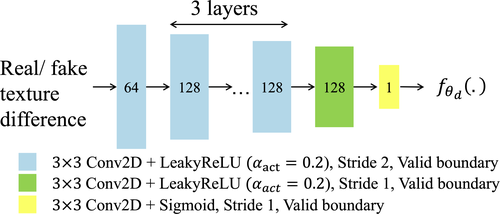}
    \caption{Discriminator architecture for TMGAN}
    \label{figtmgandisc}
\end{figure}

  \begin{table}
    \centering
        \caption{Summary of parameters for TMGAN}
   \begin{tabularx}{\linewidth} { 
  | m{1.3cm}
  | m{3.085cm}  | m{0.5cm}
  | m{0.5cm} | }
    \hline
        \multicolumn{1}{|c|}{\makecell{Parameter\\(range)}}&  \multicolumn{1}{c|}{\makecell{Significance}}& \multicolumn{2}{c|}{\makecell{Values used\\in Experiments}} \\ &&\multicolumn{1}{c|}{\makecell{Denoising}}&\multicolumn{1}{c|}{\makecell{Sharpening}}\\\hline
         \multicolumn{1}{|c|}{\makecell{$\alpha$ \\$(\in [0.5, 1.0])$}} &Smaller $\alpha$ increases detail by reducing bias. See~\eqref{gencost}. &\multicolumn{1}{c|}{\makecell{$0.5$}}&\multicolumn{1}{c|}{\makecell{$1.0$}} \\\hline
         $\lambda \hspace{1ex} (\geq 0)$ & Larger $\lambda$ improves texture match. See~\eqref{gencost}&  \multicolumn{2}{c|}{\makecell{See results}} \\\hline
         $\sigma \hspace{1ex} (> 0)$& Smaller $\sigma$ reduces squared error. See~\eqref{gencost}.
           & \multicolumn{2}{c|}{\makecell{See results}} \\\hline
         \multicolumn{1}{|c|}{\makecell{$\eta$ \\$(\in [0, 1.0])$}}& Fraction of TMGAN in TMGAN-blended image. See~\eqref{blendingeq}. & \multicolumn{1}{c|}{\makecell{$0.3$}}& \multicolumn{1}{c|}{\makecell{$1.0$}}\\\hline
         $T_d \hspace{1ex} (> 0)$& The Discriminator is updated if its loss is greater than $T_d$. See Alg.~\ref{training}. &\multicolumn{1}{c|}{\makecell{0.2}}&\multicolumn{1}{c|}{\makecell{0.2}}\\\hline
         $N_d \hspace{1ex} (\geq 1)$& Maximum number of discriminator updates per generator update. See Alg.~\ref{training}. &\multicolumn{1}{c|}{\makecell{$1$}}&\multicolumn{1}{c|}{\makecell{$5$}}\\\hline
            $N\hspace{1ex}\gg 1$& The total number of generator updates. See Alg.~\ref{training}. &\multicolumn{1}{c|}{\makecell{$\sim 15$}}&\multicolumn{1}{c|}{\makecell{$\sim 15$}} \\\hline
    \end{tabularx}
    \label{tab:para}
\end{table}

Algorithm~\ref{training} shows the pseudocode to train the TM generator.
To avoid common instability and nonconvergence issues encountered in training GANs \cite{DLbook}, we use a training procedure to reduce mode collapse and vanishing gradients. 
An optimal discriminator can help avoid mode collapse \cite{wgan}, so we use multiple discriminator updates between generator updates. 
However, an optimal discriminator might not provide enough gradient to the generator to make progress, so to avoid vanishing gradients, we update discriminator weights only if the loss is greater than a threshold $T_d.$

For inference, we employ the trained $h_\phi(\cdot )$ only.
   \begin{table}[t]
        \centering 
         \caption{Clinical test exams}
        \begin{tabular}{|c|c|c|c|c|c|c|c|}%
        \hline
        \multicolumn{1}{|c|}{\makecell{Exam\\ name}}&
        \multicolumn{1}{c|}{\makecell{Scanned\\ object}}&
        \multicolumn{1}{c|}{\makecell{Focal \\spot size}}&
        \multicolumn{1}{c|}{\makecell{Dosage\\ (kVp/mA)}}&
        \multicolumn{1}{c|}{\makecell{DFOV \\(cm)}}\\\hline
        Exam 1&
        \multicolumn{1}{c|}{\makecell{Clinical body}}&
        XL&
        80/375&
        \multicolumn{1}{c|}{31.1}\\\hline
        
        Exam 2&
        \multicolumn{1}{c|}{\makecell{Clinical body}}&
        small&
        100/220&
        \multicolumn{1}{c|}{49.2}\\\hline
        
        Exam 3&
        \multicolumn{1}{c|}{\makecell{Clinical body}}&
        small&
        120/110&
        \multicolumn{1}{c|}{35.0}\\\hline

        Exam 4&
        \multicolumn{1}{c|}{\makecell{Clinical body}}&
        small&
        120/350&
        \multicolumn{1}{c|}{39.4}\\\hline
        
       Exam 5&
        \multicolumn{1}{c|}{\makecell{Clinical head}}&
        XL&
        120/530&
        \multicolumn{1}{c|}{15}\\\hline

        \end{tabular}
        \label{testdata}
    \end{table}
\begin{figure}[t]
\begin{minipage}{\linewidth} 
  \begin{tabular}{cc}
          \multicolumn{1}{c}{\makecell{Input texture}} & \multicolumn{1}{c}{\makecell{Target texture}}\\
         \begin{subfigure}[b]{2.5cm}
            \centerline{\includegraphics[width=2.0cm]{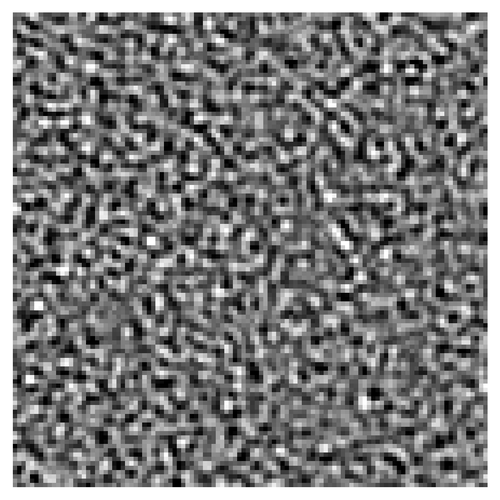}}
            \subcaption{}
            \label{wpinput}
        \end{subfigure}%
     &%
        \begin{subfigure}[b]{2.5cm}
            \centerline{\includegraphics[width=2.0cm]{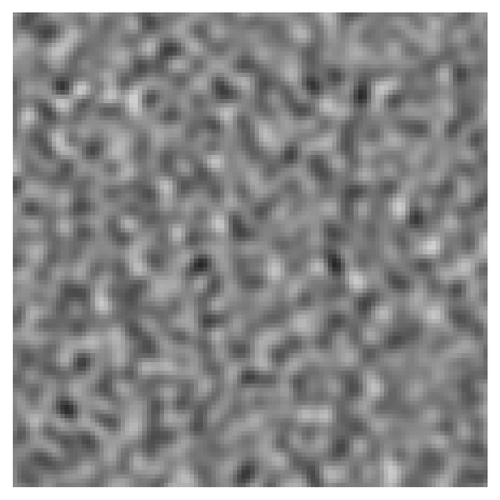}}
            \subcaption{}
            \label{wprealtexture} 
        \end{subfigure}%
  \end{tabular}\\
    \begin{tabular}{ccc}
    \multicolumn{1}{c}{\makecell{TMGAN \\$\lambda=0.0$}}&
    \multicolumn{1}{c}{\makecell{TMGAN\\$\lambda=0.01$}}&
    \multicolumn{1}{c}{\makecell{TMGAN\\$\lambda=0.04$}}\\
        \begin{subfigure}[b]{2.5cm}
            \centerline{\includegraphics[width=2.0cm]{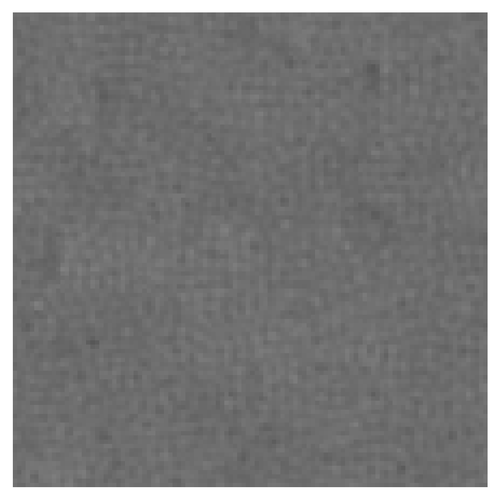}}
            \subcaption{}
            \label{lam0}
        \end{subfigure}%
     &%
        \begin{subfigure}[b]{2.5cm}
            \centerline{\includegraphics[width=2.0cm]{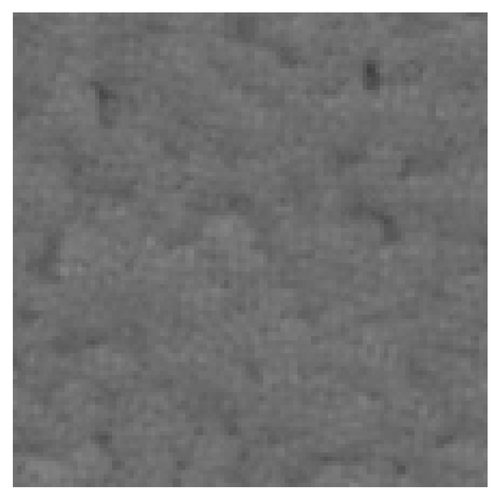}}
            \subcaption{}
            \label{lam001}
        \end{subfigure}%
    &%
         \begin{subfigure}[b]{2.5cm}
             \centerline{\includegraphics[width=2.0cm]{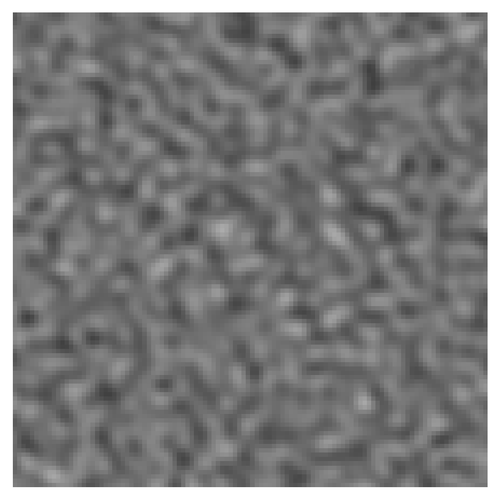}}
             \subcaption{}
             \label{lam004}
         \end{subfigure}%
\end{tabular}
\caption{
Comparison of TMGAN generated textures.
(\protect\subref{wpinput}) Input to TMGAN (water phantom with bone+ filter recon),
(\protect\subref{wprealtexture}) Target texture (water phantom with standard filter recon), 
Result of TMGAN with 
(\protect\subref{lam0}) $\lambda=0$, (\protect\subref{lam001}) $\lambda=0.01$, and (\protect\subref{lam004}) $\lambda=0.04$. 
Notice that texture in results becomes more similar to the target texture as $\lambda$ increases.
}
\label{figWP_example}
\end{minipage}
\end{figure}

\begin{figure}[t]
    \centering
    \includegraphics[width=0.9\linewidth]{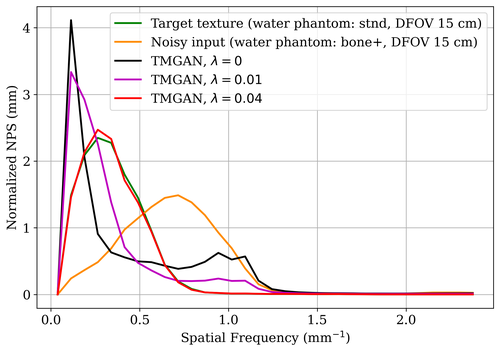}
    \caption{NPS plots for texture images in Fig.~\ref{figWP_example}. Notice that as $\lambda$ increases, the shape of the NPS plots becomes more similar to the NPS of the target texture.}
    \label{fignps1d}
\end{figure}

\subsection{Blending}

To provide fine-grained control over the amount of texture, we use a blending or averaging scheme between TMGAN and standard image estimation.  When $\lambda = 0$ in \eqref{gencost}, TMGAN reduces to the bias reducing network of \cite{BRpaper}, which we call BR-$\alpha$.  
This gives an accurate image estimate but without the target texture. The blended estimate is then
\begin{align}\label{blendingeq}
    \hat{x}^{\text{(TMGAN-blended)}} = \eta h_\phi(y) + (1-\eta)h_{BR-\alpha}(y),
\end{align}
where $y$ is the noisy input, $\eta$ is the blending ratio, and $h_{BR-\alpha}(y)$ has no texture matching term.

When $\eta=1.0$, the blended result is pure TMGAN, and when $\eta=0.0$ it is pure image estimation.
However, for intermediate values of $\eta$, it blends these two results, with larger values of $\eta$ resulting in more  texture, and smaller values resulting in reduced noise and more image detail.

\section{Methods}\label{emethod}

\subsection{TMGAN Implementation}

We train TMGAN separately for the two applications of denoising and sharpening of noisy CT images. 
While the network architecture remains the same for both applications, the training data and test data is different.

For the generator architecture, we used a CNN adopted from \cite{zhang2017beyond} with a single input channel and 17 convolution layers. We modified the discriminator architecture in \cite{isola2017image} to approximately match its strength (the number of trainable parameters) to the generator.
Fig.~\ref{figtmgandisc} shows the discriminator architecture consisting of a series of 2D convolutional layers.

Table \ref{tab:para} lists all the hyperparameters used in the TMGAN algorithm along with their significance, ranges and settings used in the experiments. 
The settings for $\lambda$ and $\sigma$ varied by experiment and are listed with the corresponding results. The values for all the parameters were selected empirically.

\begin{table}
    \centering
        \caption{Standard deviation for results in Fig.~\ref{figWP_example}}
    \begin{tabular}{cc}
    \hline
        Result & Standard deviation in HU  \\\hline
        Fig.~\ref{figWP_example}(\subref{wpinput}): Input texture  & 70.20 \\\hline
         Fig.~\ref{figWP_example}(\subref{wprealtexture}): Target texture & 30.42\\\hline
         Fig.~\ref{figWP_example}(\subref{lam0}): TMGAN $\lambda=0.0$& 6.14 \\\hline
         Fig.~\ref{figWP_example}(\subref{lam001}): TMGAN $\lambda=0.01$& 8.32\\\hline
         Fig.~\ref{figWP_example}(\subref{lam004}): TMGAN $\lambda=0.04$& 20.55\\\hline
    \end{tabular}
    \label{stdfig3}
\end{table}
\begin{figure}[t]
    \centering
    \includegraphics[width=0.9\linewidth]{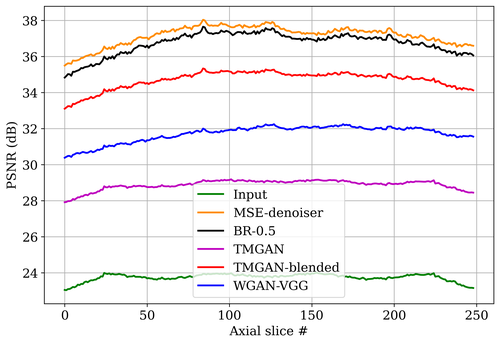}
    \caption{Comparison of PSNR for axial slices of a synthetic exam. TMGAN trained with $\lambda = 0.4, \sigma=7.8$ HU. TMGAN-blended preserves texture with a small reduction in PSNR.}
    \label{figpsnr}
\end{figure}

\subsection{Datasets}
All the scans used in training and evaluating TMGAN were acquired using a GE Revolution CT scanner (GE HealthCare, WI, USA)\footnote{We thank GE HealthCare for collecting the datasets}. The scans are reconstructed to a slice thickness of 0.625 mm and dimension $512 \times 512.$ The standard (stnd) reconstruction kernel option available on the scanner and 40 cm 
DFOV (Display-Field-of-View) are used unless specified otherwise. The scans are reconstructed in HU units, and we added an offset of 1000 to all the images while training and testing so that air is 0. 

\subsubsection{Training and Validation Data}
Ground truth images for denoiser training were generated by reconstructing 10 clinical scans with X-ray tube voltage and current varying from scan to scan in the range of 100-120 kVp and 445-1080 mA, respectively.
The scans  were reconstructed using the GE's TrueFidelity DLIR option \cite{GEwhitepaper}. 

Four ground truth images for sharpener training were obtained by averaging repeated scans of two distinct head phantoms in order to reduce noise.
Each scan was acquired with a small focal spot size at 120 kVp/ 320 mA and reconstructed with a bone+ kernel\cite{NPSFPaper} to a DFOV of 15 cm.

Noise and texture samples for training were generated by removing the mean from scans of 6 water phantoms, obtained with a tube voltage of 120 kVp, current of 350 to 380 mA.

For denoising, the deformation operator, $G(.)$, was simply the identity operator.
This was followed by the addition of two independent noise samples from the water phantoms to generate the two conditionally independent noisy input image samples, $y_{k,1}$ and $y_{k,2}$.
For sharpening, the deformation operator, $G(.)$, was the application of a Gaussian filter of standard deviation $0.244$ mm, $0.244$ mm, $0.344$ mm in $x, y,$ and $z$ directions, respectively. 
This was again followed by the same process to form $y_{k,1}$ and $y_{k,2}$.

In both tasks, the GAN training and validation data was produced by breaking slices into $128 \times 128 \times 1$ patches, with the patches randomly partitioned as 97\% for training and 3\% for validation. 
All DNN trainings were performed using the Adam optimizer \cite{ADAM} with a learning rate of ${3\times10^{-5}}$ for the generator and ${3\times10^{-6}}$ for the discriminator and a mini-batch size of 32. 
The network was implemented in TensorFlow \cite{tensorflow2015-whitepaper} and trained with an NVIDIA Tesla V100 GPU.

\begin{table}[t]
    \centering
       \caption{PSNR and SSIM for 9 synthetic test exams}
    \begin{tabular}{m{2.3cm}m{2.35cm}m{2.35cm}}
    \hline
         \textbf{Method} & \textbf{PSNR} (mean $\pm$ std)  & \textbf{SSIM} (mean $\pm$ std) \\\hline
        Input & $23.74 \pm 0.01$ & $0.70 \pm 0.07$\\\hline
        MSE-denoiser& $37.20 \pm 0.77$ & $0.81 \pm 0.09$\\\hline
        BR-$0.5$ & $36.85 \pm 0.81$ & $0.82 \pm 0.08$\\\hline
       WGAN-VGG & $30.75 \pm 0.56$ & $0.77 \pm 0.09$\\\hline
        TMGAN & $28.98 \pm 0.18$ & $0.74 \pm 0.08$\\\hline
       TMGAN-blended & $34.87 \pm 0.58$ & $0.79 \pm 0.09$\\\hline
    \end{tabular}
    \label{tab:quanteval}
\end{table}

\begin{figure}[t]
    \centering
    \includegraphics[width=0.9\linewidth]{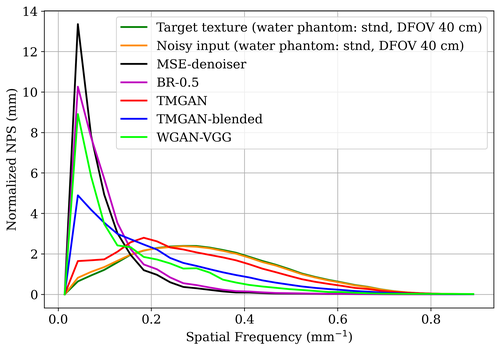}
    \caption{NPS plots of water phantom reconstructions for denoising algorithms. TMGAN (trained with $\lambda = 0.4, \sigma=7.8$ HU) produces closest match with the NPS of target texture. \mbox{TMGAN-blended} is second best with NPS slightly skewed towards the origin while preserving higher frequencies too.}
    \label{fignpsdenoised}
\end{figure}
\ifx\addfigs\undefined
\begin{figure*}[t!]
\begin{minipage}{\textwidth} 
    \begin{tabular}{c@{}c@{}c@{}c@{}c@{}c}
    Noisy input & MSE-denoiser & BR-$0.5$&WGAN-VGG&TMGAN&TMGAN-blended\\
        \begin{subfigure}[b]{3cm}
            \centerline{\includegraphics[width=3cm]{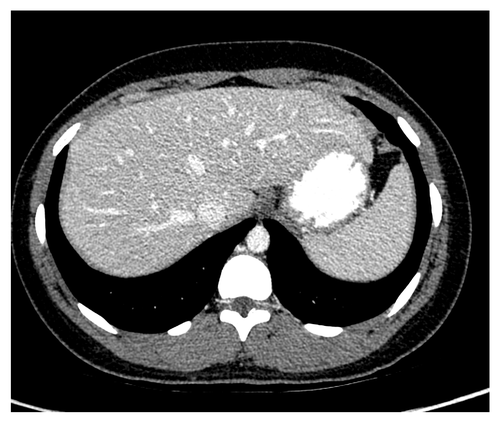}}
            \subcaption{}
            \label{in1}
        \end{subfigure}
     &%
        \begin{subfigure}[b]{3cm}
            \centerline{\includegraphics[width=3cm]{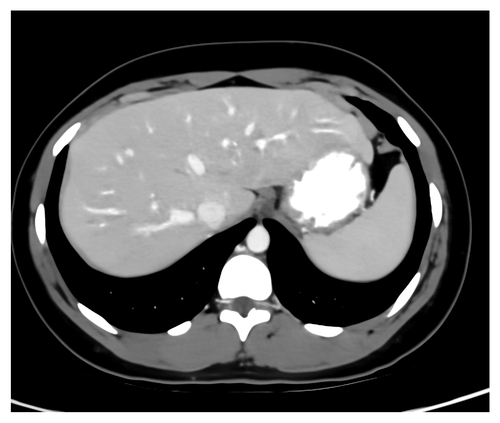}}
            \subcaption{}
            \label{mse}
        \end{subfigure}
    &%
         \begin{subfigure}[b]{3cm}
             \centerline{\includegraphics[width=3cm]{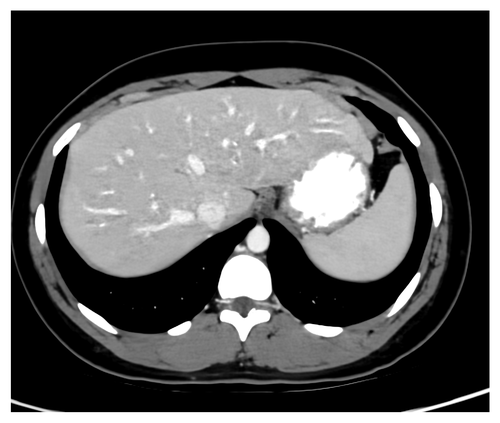}}
             \subcaption{}
             \label{br1}
         \end{subfigure}
     &%
        \begin{subfigure}[b]{3cm}
            \centerline{\includegraphics[width=3cm]{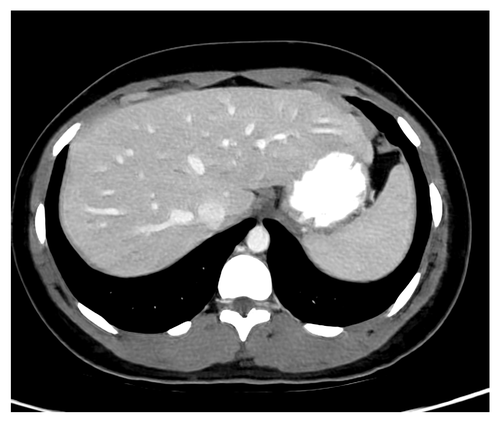}}
            \subcaption{}
            \label{wgan1}
        \end{subfigure}&
        \begin{subfigure}[b]{3cm}
            \centerline{\includegraphics[width=3cm]{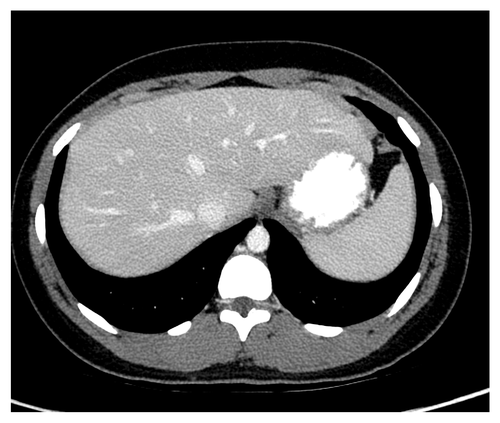}}
            \subcaption{}
            \label{tmgan1}
        \end{subfigure}&
        \begin{subfigure}[b]{3cm}
            \centerline{\includegraphics[width=3cm]{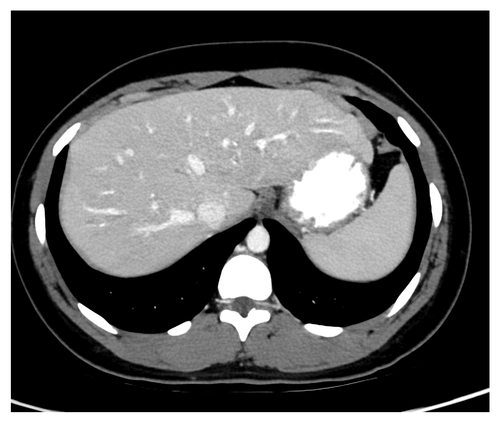}}
            \subcaption{}
            \label{tmganbl1}
        \end{subfigure}\\
                \begin{subfigure}[b]{3cm}
            \centerline{\includegraphics[width=3cm]{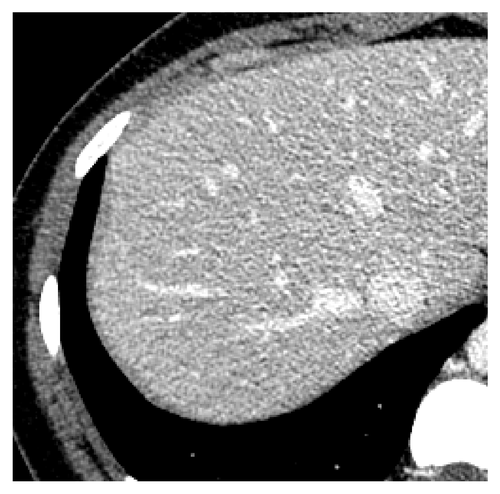}}
            \subcaption{}
            \label{in2}
        \end{subfigure}
     &%
        \begin{subfigure}[b]{3cm}
            \centerline{\includegraphics[width=3cm]{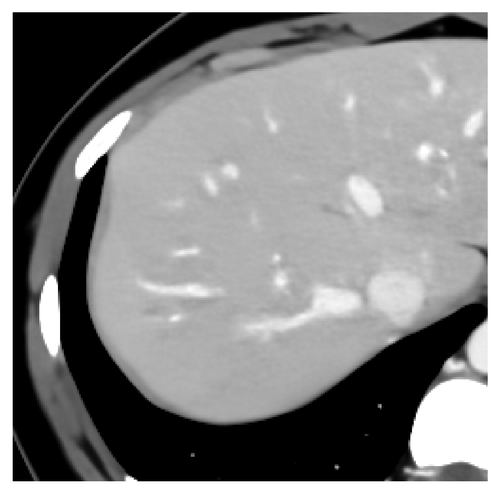}}
            \subcaption{}
            \label{mse2}
        \end{subfigure}
    &%
         \begin{subfigure}[b]{3cm}
             \centerline{\includegraphics[width=3cm]{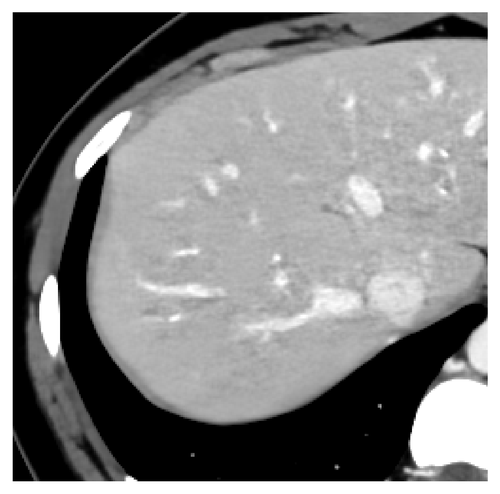}}
             \subcaption{}
             \label{br2}
         \end{subfigure}
     &%
        \begin{subfigure}[b]{3cm}
            \centerline{\includegraphics[width=3cm]{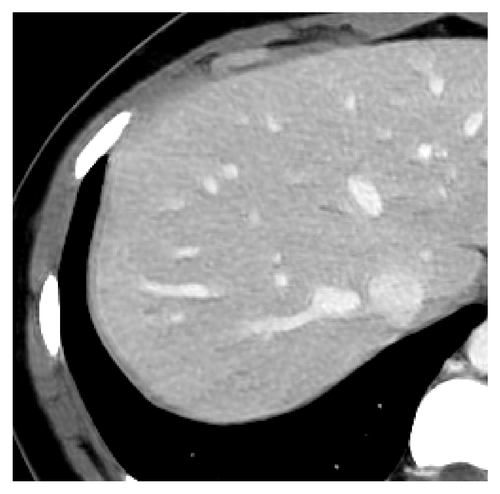}}
            \subcaption{}
            \label{wgan2}
        \end{subfigure}&
        \begin{subfigure}[b]{3cm}
            \centerline{\includegraphics[width=3cm]{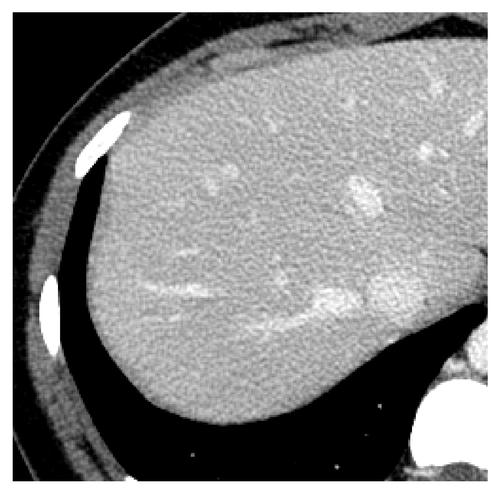}}
            \subcaption{}
            \label{tmgan2}
        \end{subfigure}&
        \begin{subfigure}[b]{3cm}
            \centerline{\includegraphics[width=3cm]{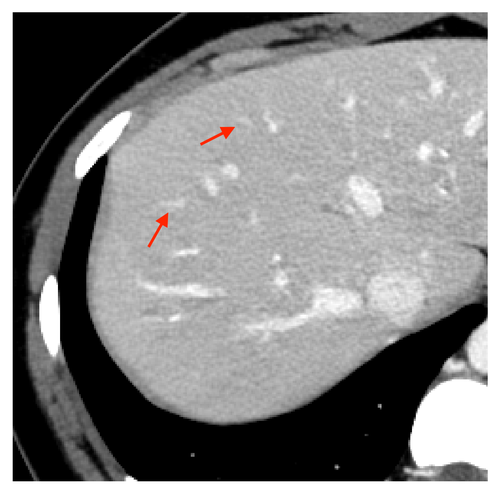}}
            \subcaption{}
            \label{tmganbl2}
        \end{subfigure}
\end{tabular}
\caption{Comparison of slice 1 of denoised results for Exam 1, a low-dose clinical scan. First row shows full slice. Second row shows zoomed ROI. TMGAN trained with $\lambda = 0.4, \sigma=7.8$ HU. Display window is [-125, 225] HU. BR-0.5 results maintain good detail, while TMGAN produces target texture which is more uniform and pleasing compared to other methods. With blending, we preserve detail (red arrows) from BR-0.5 and target texture in TMGAN.}
\label{6653real1}
\end{minipage}
\end{figure*}

\begin{figure*}[t!]
\begin{minipage}{\textwidth} 
    \begin{tabular}{c@{}c@{}c@{}c@{}c@{}c}
    Noisy input & MSE-denoiser & BR-$0.5$&WGAN-VGG&TMGAN&TMGAN-blended\\
                \begin{subfigure}[b]{3cm}
            \centerline{\includegraphics[width=3cm]{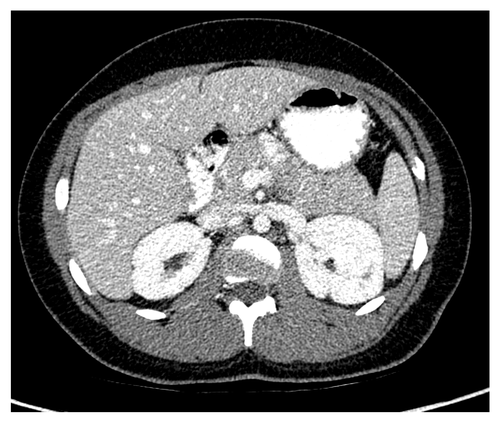}}
            \subcaption{}
            \label{in3}
        \end{subfigure}
     &%
        \begin{subfigure}[b]{3cm}
            \centerline{\includegraphics[width=3cm]{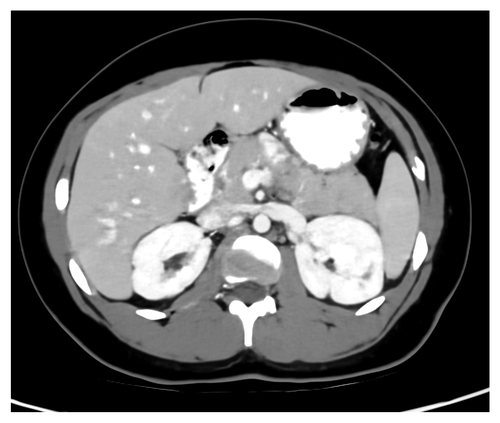}}
            \subcaption{}
            \label{mse3}
        \end{subfigure}
    &%
         \begin{subfigure}[b]{3cm}
             \centerline{\includegraphics[width=3cm]{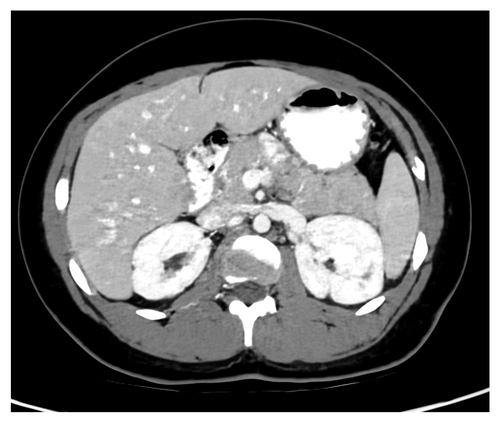}}
             \subcaption{}
             \label{br3}
         \end{subfigure}
     &%
        \begin{subfigure}[b]{3cm}
            \centerline{\includegraphics[width=3cm]{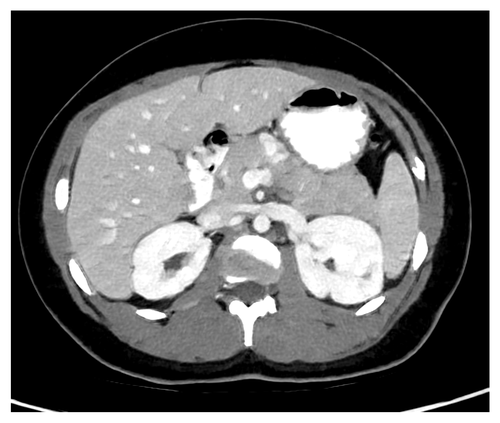}}
            \subcaption{}
            \label{wgan3}
        \end{subfigure}&
        \begin{subfigure}[b]{3cm}
            \centerline{\includegraphics[width=3cm]{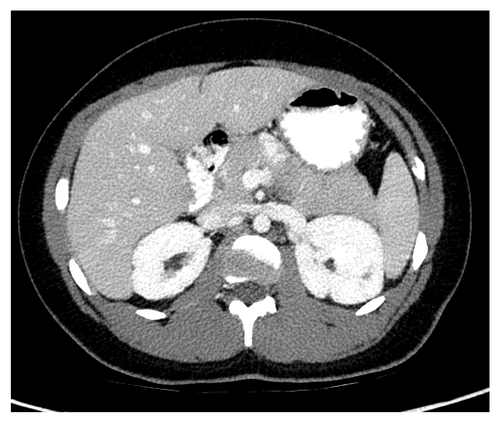}}
            \subcaption{}
            \label{tmgan3}
        \end{subfigure}&
        \begin{subfigure}[b]{3cm}
            \centerline{\includegraphics[width=3cm]{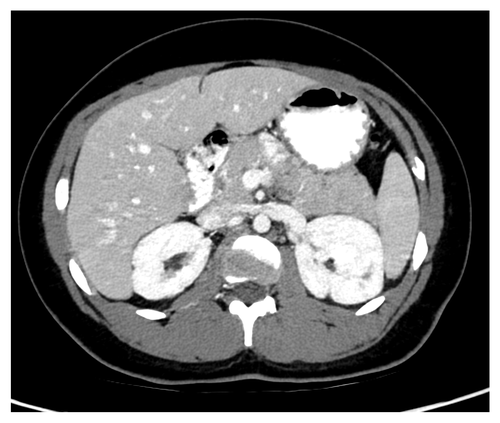}}
            \subcaption{}
            \label{tmganbl3}
        \end{subfigure}
        \\
                \begin{subfigure}[b]{3cm}
            \centerline{\includegraphics[width=3cm]{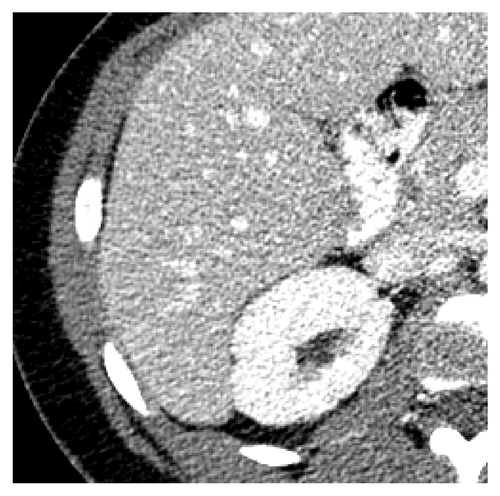}}
            \subcaption{}
            \label{in4}
        \end{subfigure}
     &%
        \begin{subfigure}[b]{3cm}
            \centerline{\includegraphics[width=3cm]{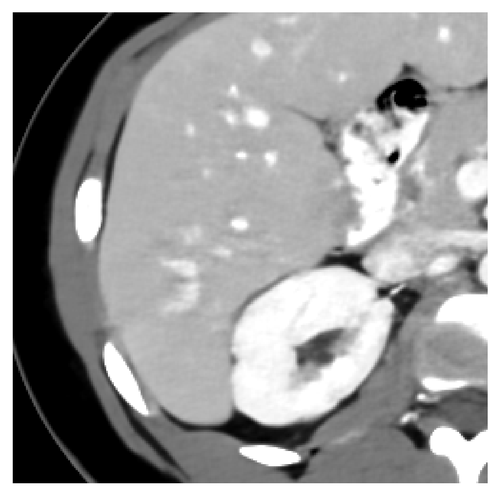}}
            \subcaption{}
            \label{mse4}
        \end{subfigure}
    &%
         \begin{subfigure}[b]{3cm}
             \centerline{\includegraphics[width=3cm]{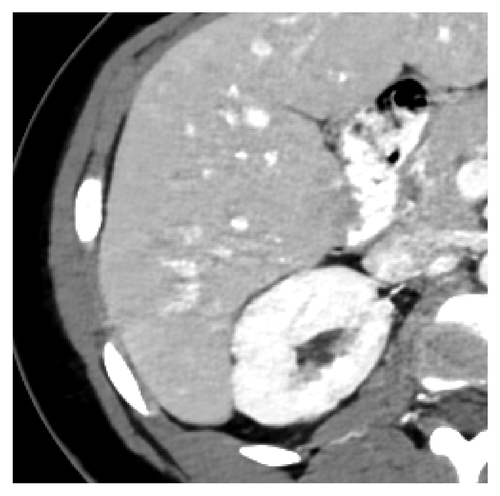}}
             \subcaption{}
             \label{br4}
         \end{subfigure}
     &%
        \begin{subfigure}[b]{3cm}
            \centerline{\includegraphics[width=3cm]{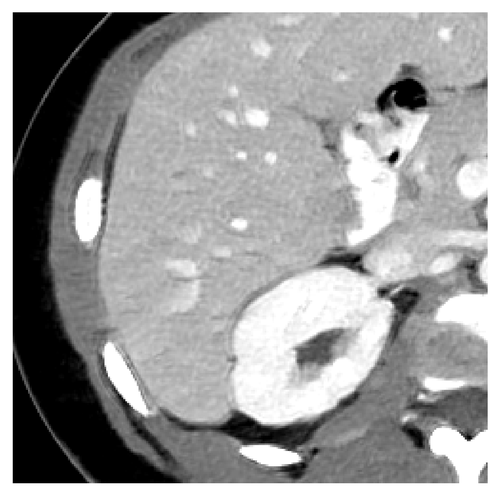}}
            \subcaption{}
            \label{wgan4}
        \end{subfigure}&
        \begin{subfigure}[b]{3cm}
            \centerline{\includegraphics[width=3cm]{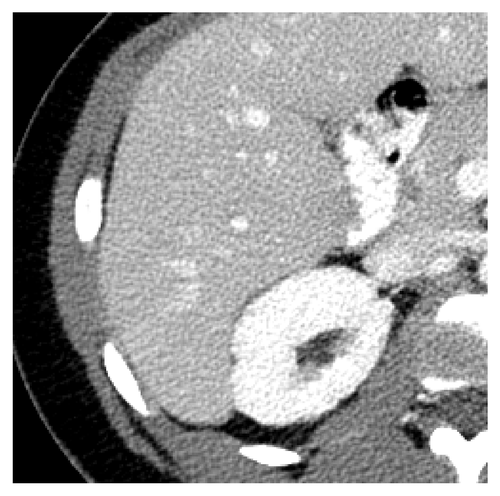}}
            \subcaption{}
            \label{tmgan4}
        \end{subfigure}&
        \begin{subfigure}[b]{3cm}
            \centerline{\includegraphics[width=3cm]{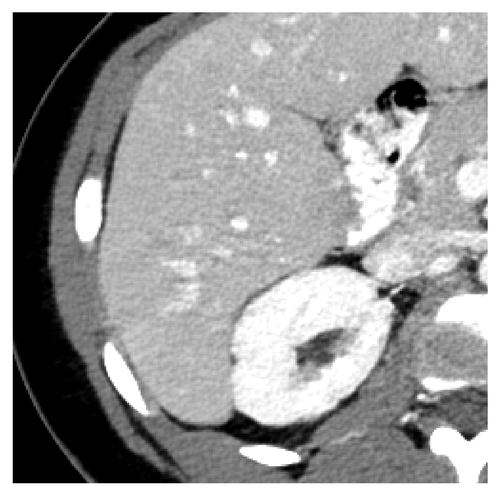}}
            \subcaption{}
            \label{tmganbl4}
        \end{subfigure}
\end{tabular}
\caption{Comparison of slice 2 of denoised results for Exam 1, a low-dose clinical scan. First row shows full slice. Second row shows zoomed ROI. TMGAN trained with $\lambda = 0.4, \sigma=7.8$ HU. Display window is [-125, 225] HU. TMGAN-blended produces target texture for a challenging input with very nonuniform texture.}
\label{6653real2}
\end{minipage}
\end{figure*}
\begin{figure*}[t!]
\begin{minipage}{\textwidth} 
    \begin{tabular}{c@{}c@{}c@{}c@{}c@{}c}
    Noisy input & MSE-denoiser & BR-$0.5$&WGAN-VGG&TMGAN&TMGAN-blended\\
                \begin{subfigure}[b]{3cm}
            \centerline{\includegraphics[width=3cm]{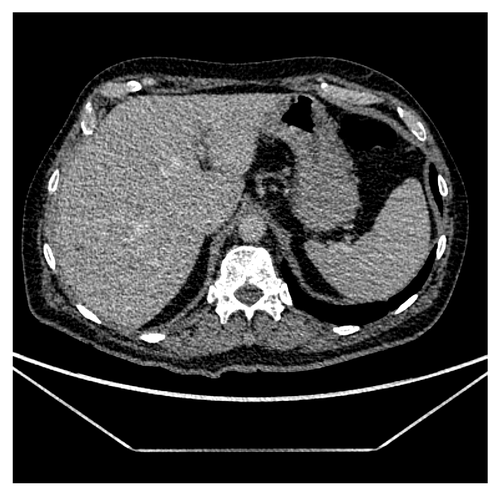}}
            \subcaption{}
            \label{in5}
        \end{subfigure}
     &%
        \begin{subfigure}[b]{3cm}
            \centerline{\includegraphics[width=3cm]{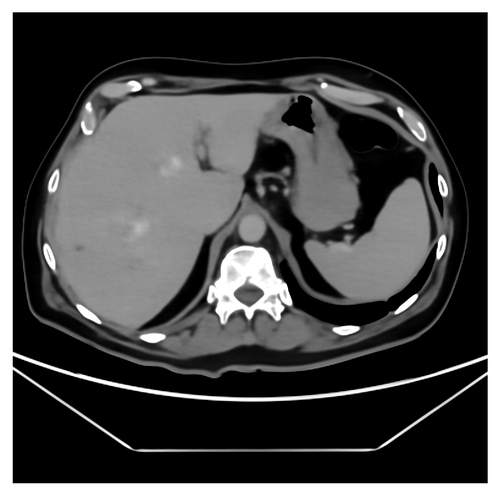}}
            \subcaption{}
            \label{mse5}
        \end{subfigure}
    &%
         \begin{subfigure}[b]{3cm}
             \centerline{\includegraphics[width=3cm]{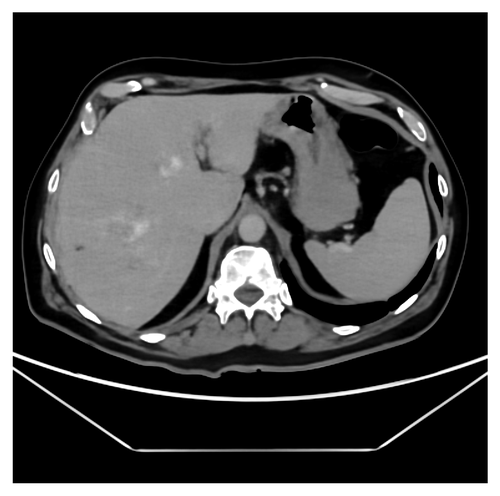}}
             \subcaption{}
             \label{br5}
         \end{subfigure}
     &%
        \begin{subfigure}[b]{3cm}
            \centerline{\includegraphics[width=3cm]{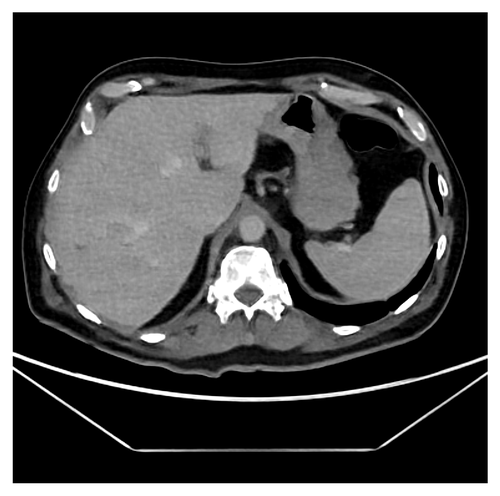}}
            \subcaption{}
            \label{wgan5}
        \end{subfigure}&
        \begin{subfigure}[b]{3cm}
            \centerline{\includegraphics[width=3cm]{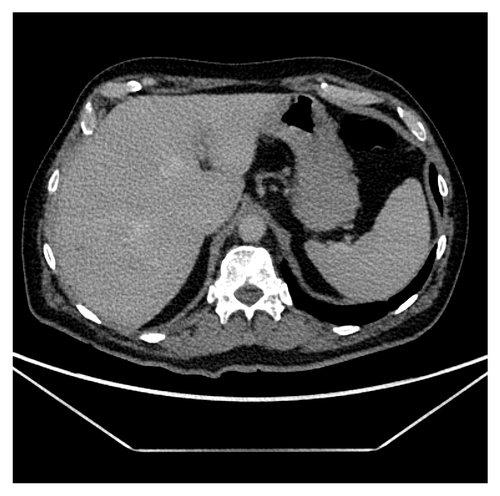}}
            \subcaption{}
            \label{tmgan5}
        \end{subfigure}&
        \begin{subfigure}[b]{3cm}
            \centerline{\includegraphics[width=3cm]{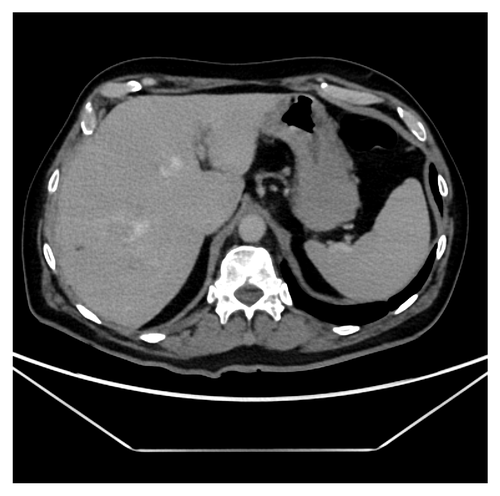}}
            \subcaption{}
            \label{tmganbl5}
        \end{subfigure}
        \\
                \begin{subfigure}[b]{3cm}
            \centerline{\includegraphics[width=3cm]{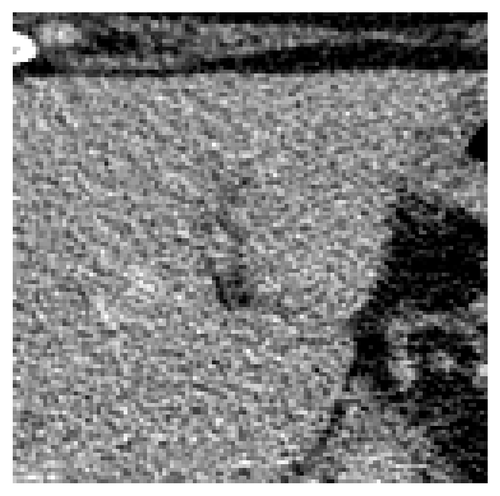}}
            \subcaption{}
            \label{in6}
        \end{subfigure}
     &%
        \begin{subfigure}[b]{3cm}
            \centerline{\includegraphics[width=3cm]{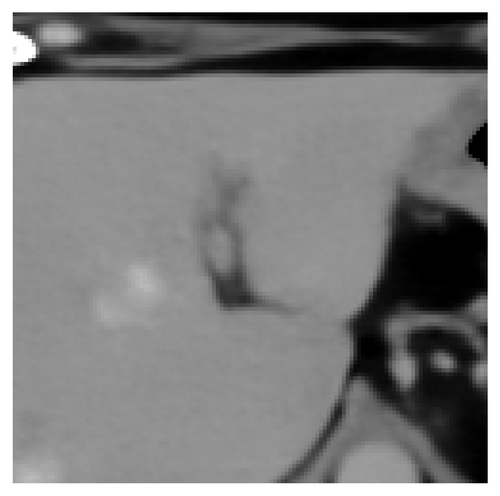}}
            \subcaption{}
            \label{mse6}
        \end{subfigure}
    &%
         \begin{subfigure}[b]{3cm}
             \centerline{\includegraphics[width=3cm]{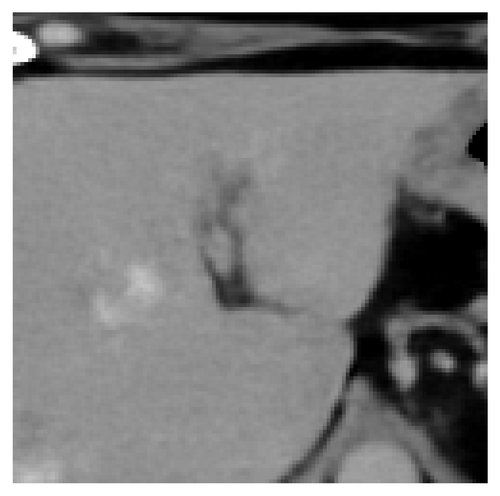}}
             \subcaption{}
             \label{br6}
         \end{subfigure}
     &%
        \begin{subfigure}[b]{3cm}
            \centerline{\includegraphics[width=3cm]{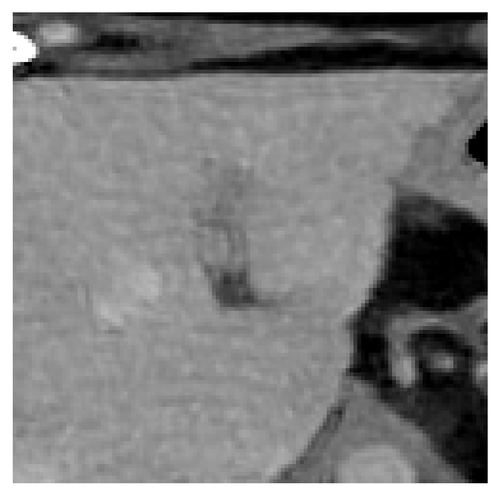}}
            \subcaption{}
            \label{wgan6}
        \end{subfigure}&
        \begin{subfigure}[b]{3cm}
            \centerline{\includegraphics[width=3cm]{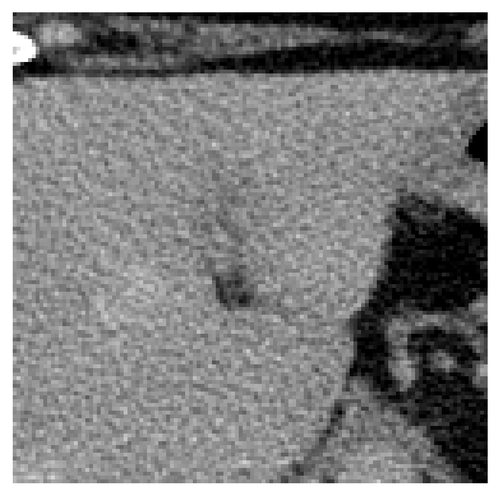}}
            \subcaption{}
            \label{tmgan6}
        \end{subfigure}&
        \begin{subfigure}[b]{3cm}
            \centerline{\includegraphics[width=3cm]{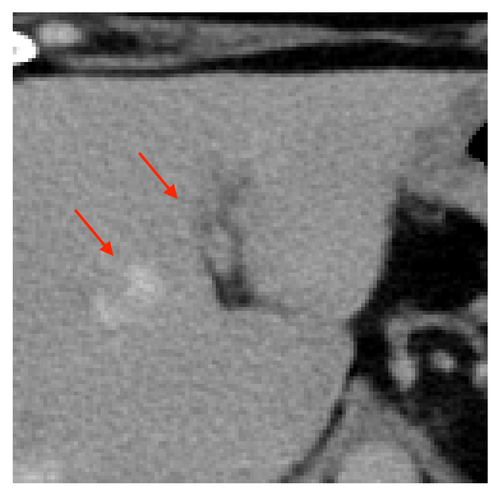}}
            \subcaption{}
            \label{tmganbl6}
        \end{subfigure}
\end{tabular}
\caption{Comparison of slice 1 of denoised results for Exam 2, a low-dose low-contrast clinical scan. First row shows full slice. Second row shows zoomed ROI. TMGAN trained with $\lambda = 0.4, \sigma=7.8$ HU. Display window is [-125, 225] HU. BR-0.5 results maintain good detail, while TMGAN produces target texture which is more uniform and pleasing compared to other methods. With blending, we preserve detail (red arrows) from BR-0.5 and target texture in TMGAN.}
\label{9323real1}
\end{minipage}
\end{figure*}
\begin{figure*}[t!]
\begin{minipage}{\textwidth} 
    \begin{tabular}{c@{}c@{}c@{}c@{}c@{}c}
    Noisy input & MSE-denoiser & BR-$0.5$&WGAN-VGG&TMGAN&TMGAN-blended\\
                \begin{subfigure}[b]{3cm}
            \centerline{\includegraphics[width=3cm]{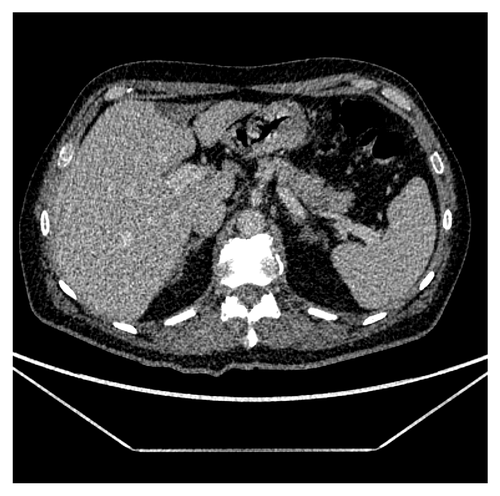}}
            \subcaption{}
            \label{in7}
        \end{subfigure}
     &%
        \begin{subfigure}[b]{3cm}
            \centerline{\includegraphics[width=3cm]{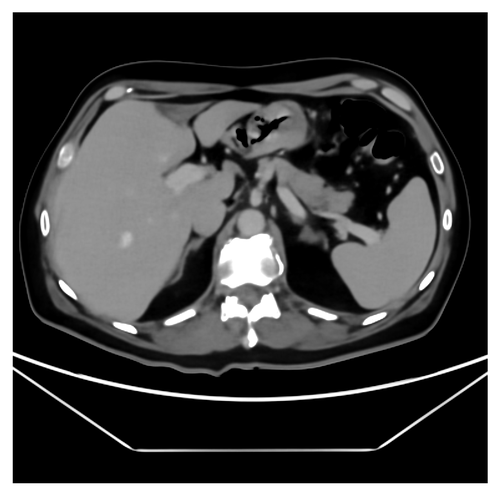}}
            \subcaption{}
            \label{mse7}
        \end{subfigure}
    &%
         \begin{subfigure}[b]{3cm}
             \centerline{\includegraphics[width=3cm]{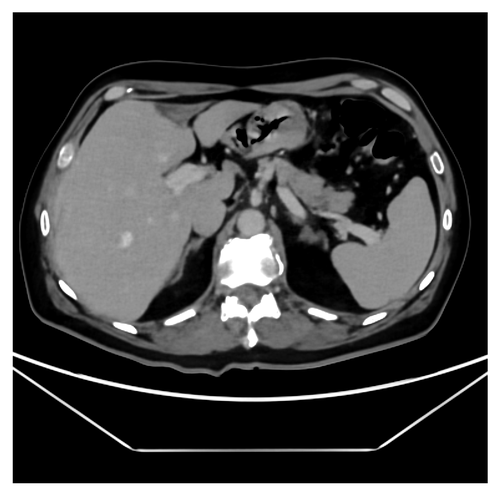}}
             \subcaption{}
             \label{br7}
         \end{subfigure}
     &%
        \begin{subfigure}[b]{3cm}
            \centerline{\includegraphics[width=3cm]{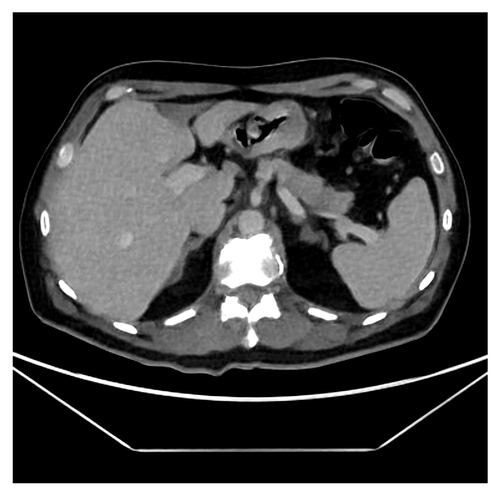}}
            \subcaption{}
            \label{wgan7}
        \end{subfigure}&
        \begin{subfigure}[b]{3cm}
            \centerline{\includegraphics[width=3cm]{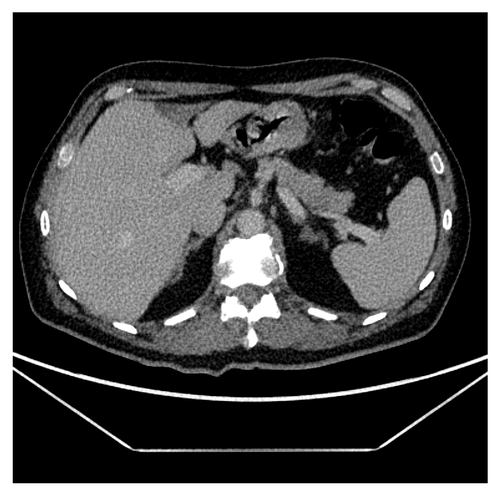}}
            \subcaption{}
            \label{tmgan7}
        \end{subfigure}&
        \begin{subfigure}[b]{3cm}
            \centerline{\includegraphics[width=3cm]{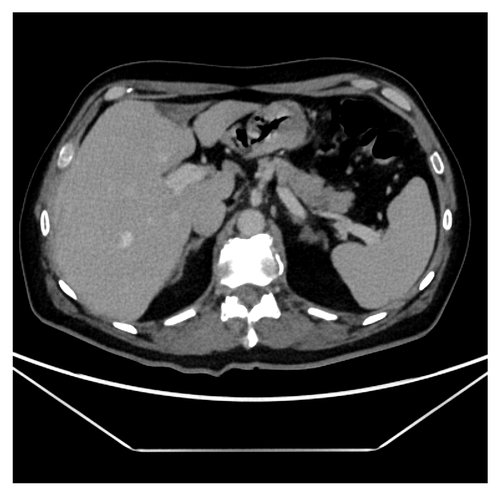}}
            \subcaption{}
            \label{tmganbl7}
        \end{subfigure}
        \\
                \begin{subfigure}[b]{3cm}
            \centerline{\includegraphics[width=3cm]{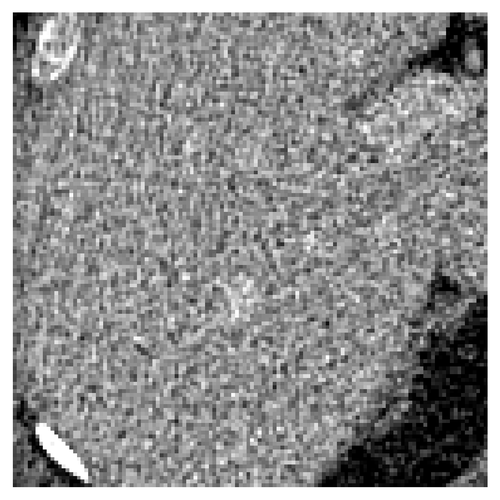}}
            \subcaption{}
            \label{in8}
        \end{subfigure}
     &%
        \begin{subfigure}[b]{3cm}
            \centerline{\includegraphics[width=3cm]{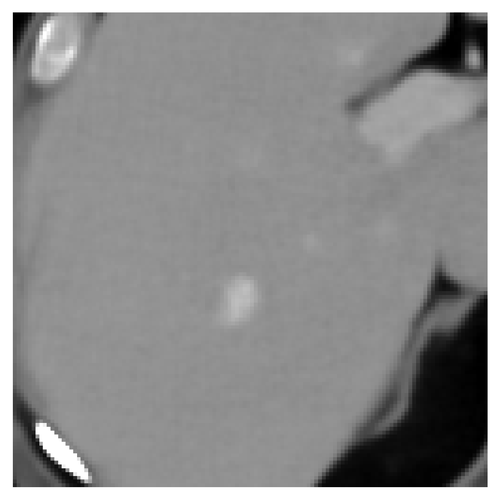}}
            \subcaption{}
            \label{mse8}
        \end{subfigure}
    &%
         \begin{subfigure}[b]{3cm}
             \centerline{\includegraphics[width=3cm]{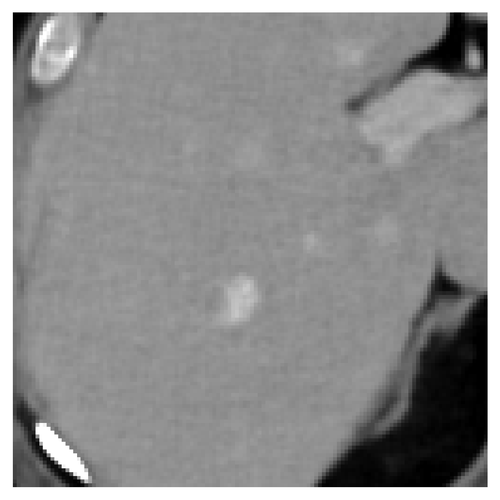}}
             \subcaption{}
             \label{br8}
         \end{subfigure}
     &%
        \begin{subfigure}[b]{3cm}
            \centerline{\includegraphics[width=3cm]{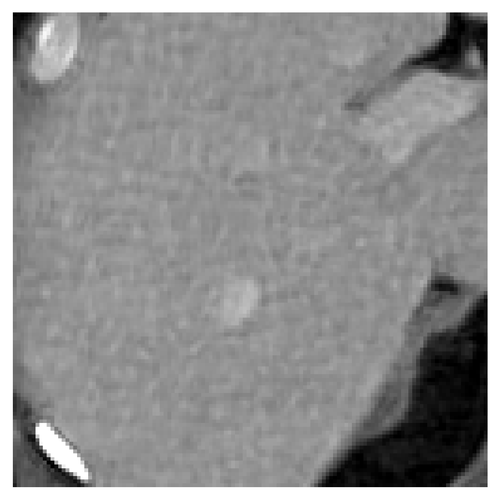}}
            \subcaption{}
            \label{wgan8}
        \end{subfigure}&
        \begin{subfigure}[b]{3cm}
            \centerline{\includegraphics[width=3cm]{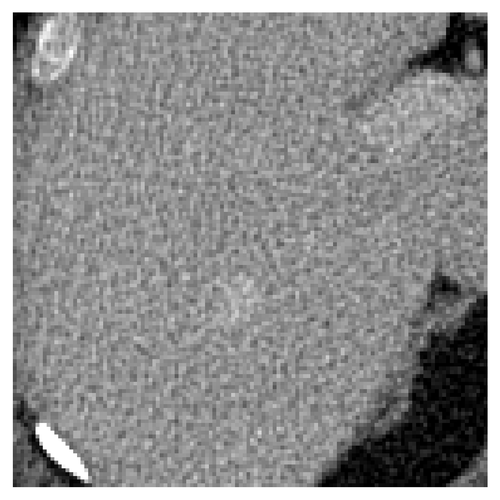}}
            \subcaption{}
            \label{tmgan8}
        \end{subfigure}&
        \begin{subfigure}[b]{3cm}
            \centerline{\includegraphics[width=3cm]{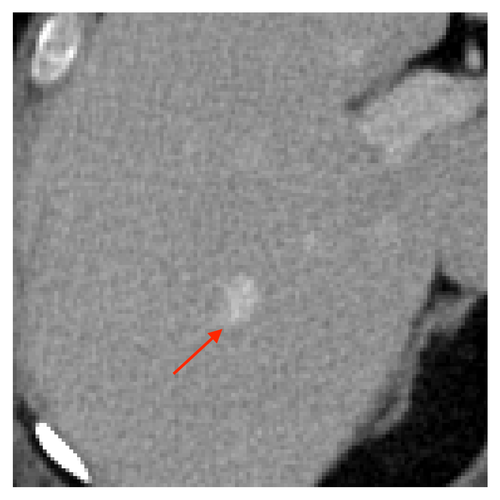}}
            \subcaption{}
            \label{tmganbl8}
        \end{subfigure}
\end{tabular}
\caption{Comparison of slice 2 of denoised results for Exam 2, a low-dose low-contrast clinical scan. First row shows full slice. Second row shows zoomed ROI. TMGAN trained with $\lambda = 0.4, \sigma=7.8$ HU. Display window is [-125, 225] HU. BR-0.5 results maintain good detail, while TMGAN produces target texture which is more uniform and pleasing compared to other methods. With blending, we preserve detail (red arrow) from BR-0.5 and target texture in TMGAN.}
\label{9323real2}
\end{minipage}
\end{figure*}

\fi

\subsubsection{Test Data and Evaluation Metrics}

For quantitative evaluation, we generated  realistic synthesized clean images by reconstructing 9 clinical scans with the GE’s TrueFidelity DLIR option \cite{GEwhitepaper}.
The X-ray tube voltage and current varied from scan to scan in the range of 80-120 kVp and 55-375 mA, respectively. 
We also scanned a water phantom with a tube voltage of 120 kVp and current 350 mA.
The clean volumes were used as ground truth, and noise from the water phantom was added to simulate the scanner noise. None of these scans were used in training. 
Using the ground truth, we computed the peak signal-to-noise ratio (PSNR) and SSIM (structural similarity) \cite{SSIM} metrics. 
We also show NPS (noise
power spectrum) results computed using the methods of~\cite{solomon2012quantitative}.

Table~\ref{testdata} lists the clinical exams used to test the algorithms.
None of these exams were used in training.
For each exam, the table lists the scan content along with various scan parameters.
Exams~1-4 were used for the denoising experiment.  
Exam~5 was captured with an extra large (XL) focal spot size, which produced blurred features in the captured image;  this exam was used for the sharpening experiment.
Since these are clinical exams, no ground truth is available, so we provide qualitative evaluation only. 

For denoising, TMGAN was compared to the following alternatives:
\begin{itemize}
\item \textbf{MSE-denoiser}: Denoiser trained only with MSE loss
\item \textbf{BR-}$0.5$: Denoiser using bias reducing loss function with $\alpha=0.5$ \cite{BRpaper}
\item \textbf{WGAN-VGG}: Method in \cite{Yang2018a} implemented at \cite{GANVGGGit}  
\end{itemize}

For sharpening, TMGAN was compared to the following alternatives:
\begin{itemize}
\item \textbf{MSE-sharpener}: Sharpener trained with MSE loss 
\item \textbf{NPSF}$_1$: NPSF sharpener \cite{NPSFPaper} tuned to maintain the same level of noise as input
\item \textbf{NPSF}$_2$: NPSF sharpener \cite{NPSFPaper} tuned to achieve the same level of noise as TMGAN.
\end{itemize}
For a fair comparison, all denoisers and sharpeners, except WGAN-VGG, used the TM generator architecture.

\ifx\addfigs\undefined

\begin{figure*}[t!]
\begin{minipage}{\textwidth} 
    \begin{tabular}{c@{}c@{}c@{}c@{}c@{}c}
    Noisy input & MSE-denoiser & BR-$0.5$&WGAN-VGG&TMGAN&TMGAN-blended\\
                \begin{subfigure}[b]{3cm}
            \centerline{\includegraphics[width=3cm]{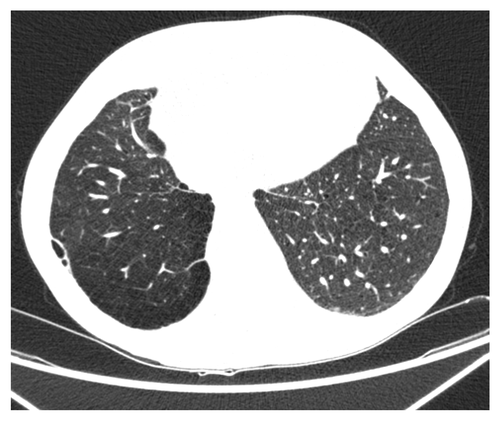}}
            \subcaption{}
            \label{in9}
        \end{subfigure}
     &%
        \begin{subfigure}[b]{3cm}
            \centerline{\includegraphics[width=3cm]{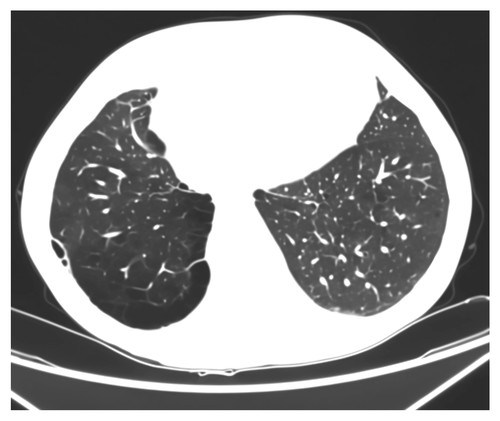}}
            \subcaption{}
            \label{mse9}
        \end{subfigure}
    &%
         \begin{subfigure}[b]{3cm}
             \centerline{\includegraphics[width=3cm]{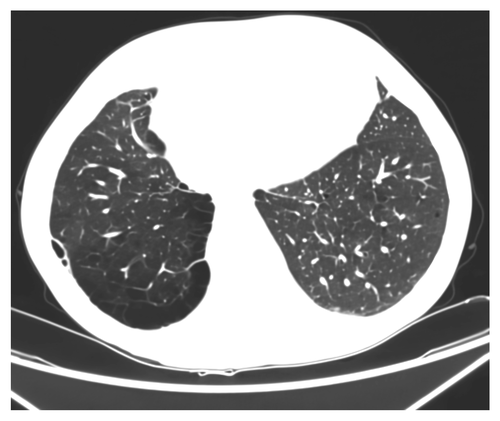}}
             \subcaption{}
             \label{br9}
         \end{subfigure}
     &%
        \begin{subfigure}[b]{3cm}
            \centerline{\includegraphics[width=3cm]{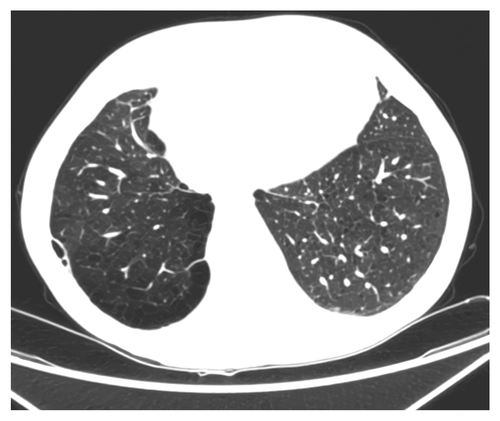}}
            \subcaption{}
            \label{wgan9}
        \end{subfigure}&
        \begin{subfigure}[b]{3cm}
            \centerline{\includegraphics[width=3cm]{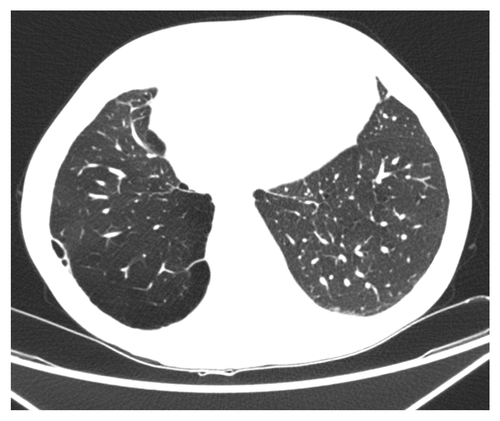}}
            \subcaption{}
            \label{tmgan9}
        \end{subfigure}&
        \begin{subfigure}[b]{3cm}
            \centerline{\includegraphics[width=3cm]{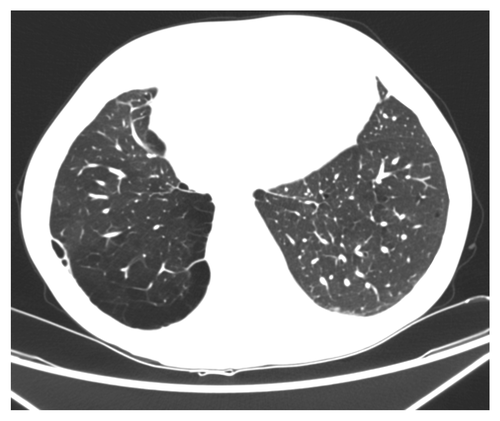}}
            \subcaption{}
            \label{tmganbl9}
        \end{subfigure}
        \\
                \begin{subfigure}[b]{3cm}
            \centerline{\includegraphics[width=3cm]{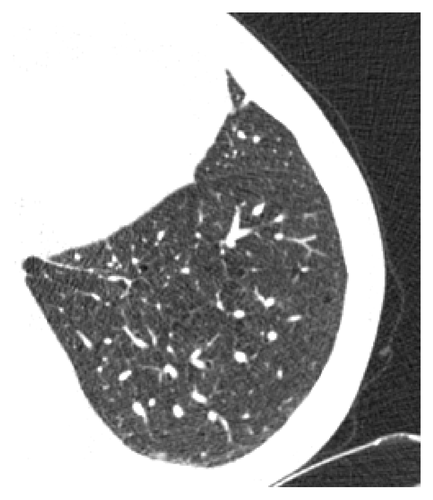}}
            \subcaption{}
            \label{in10}
        \end{subfigure}
     &%
        \begin{subfigure}[b]{3cm}
            \centerline{\includegraphics[width=3cm]{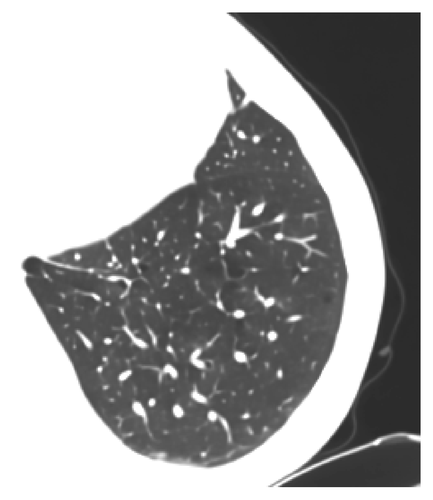}}
            \subcaption{}
            \label{mse10}
        \end{subfigure}
    &%
         \begin{subfigure}[b]{3cm}
             \centerline{\includegraphics[width=3cm]{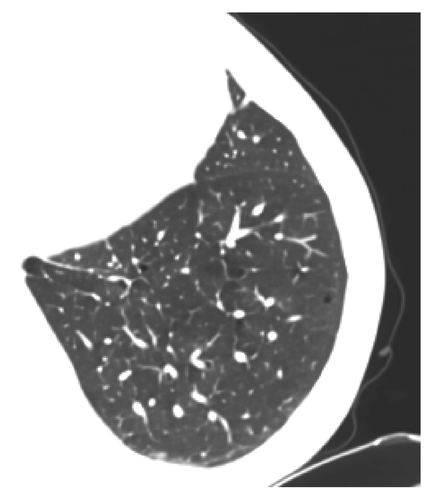}}
             \subcaption{}
             \label{br10}
         \end{subfigure}
     &%
        \begin{subfigure}[b]{3cm}
            \centerline{\includegraphics[width=3cm]{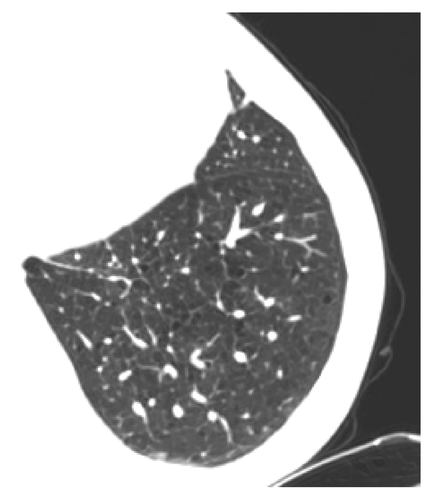}}
            \subcaption{}
            \label{wgan10}
        \end{subfigure}&
        \begin{subfigure}[b]{3cm}
            \centerline{\includegraphics[width=3cm]{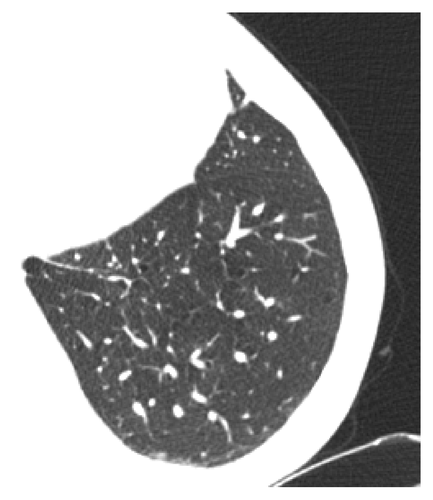}}
            \subcaption{}
            \label{tmgan10}
        \end{subfigure}&
        \begin{subfigure}[b]{3cm}
            \centerline{\includegraphics[width=3cm]{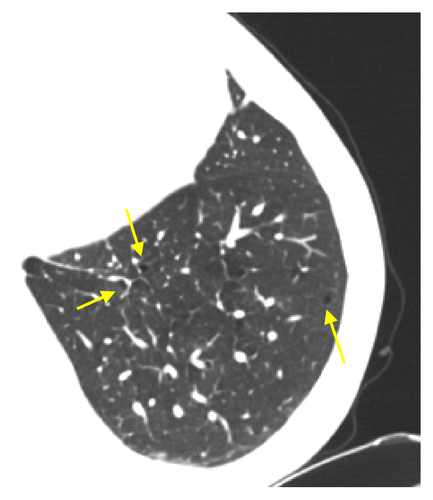}}
            \subcaption{}
            \label{tmganbl10}
        \end{subfigure}
\end{tabular}
\caption{Comparison of denoised results for Exam 3, a low-dose clinical lung scan. First row shows full slice. Second row shows zoomed ROI. TMGAN trained with $\lambda = 0.4, \sigma=7.8$ HU. Display window is [-1200, -200] HU. BR-0.5 results maintain good detail, while TMGAN produces target texture which is more uniform and pleasing compared to other methods. With blending, we preserve detail (small air pockets indicated by yellow arrows) from BR-0.5 and nice texture in TMGAN.}
\label{15107real1}
\end{minipage}
\end{figure*}

\begin{figure*}[t!]
\begin{minipage}{\textwidth} 
    \begin{tabular}{c@{}c@{}c@{}c@{}c@{}c}
    Noisy input & MSE-denoiser & BR-$0.5$&WGAN-VGG&TMGAN&TMGAN-blended\\
                \begin{subfigure}[b]{3cm}
            \centerline{\includegraphics[width=3cm]{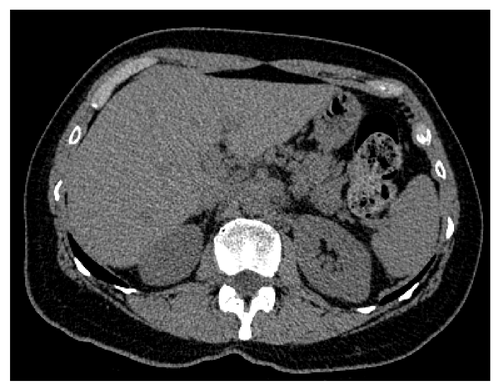}}
            \subcaption{}
            \label{in11}
        \end{subfigure}
     &%
        \begin{subfigure}[b]{3cm}
            \centerline{\includegraphics[width=3cm]{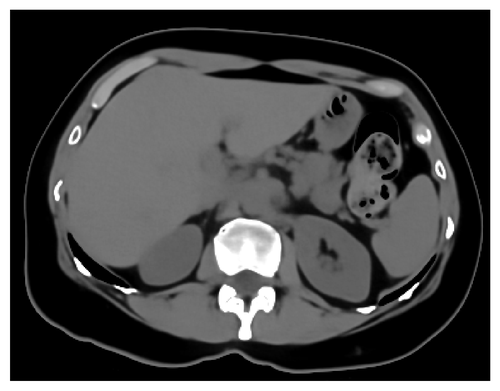}}
            \subcaption{}
            \label{mse11}
        \end{subfigure}
    &%
         \begin{subfigure}[b]{3cm}
             \centerline{\includegraphics[width=3cm]{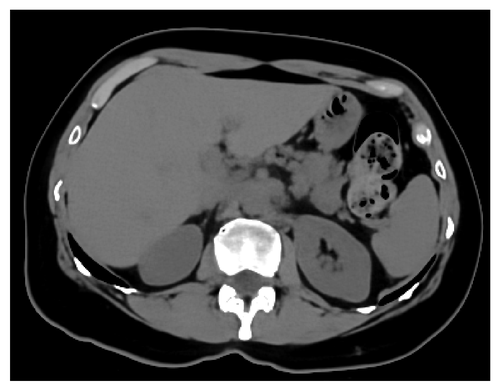}}
             \subcaption{}
             \label{br11}
         \end{subfigure}
     &%
        \begin{subfigure}[b]{3cm}
            \centerline{\includegraphics[width=3cm]{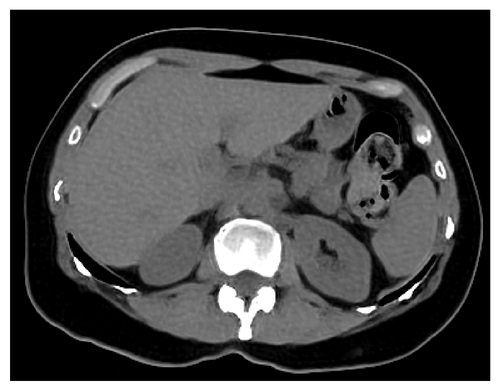}}
            \subcaption{}
            \label{wgan11}
        \end{subfigure}&
        \begin{subfigure}[b]{3cm}
            \centerline{\includegraphics[width=3cm]{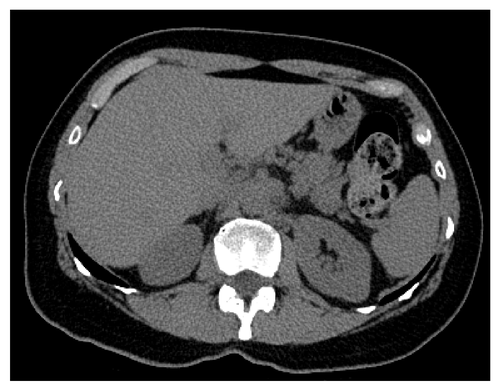}}
            \subcaption{}
            \label{tmgan11}
        \end{subfigure}&
        \begin{subfigure}[b]{3cm}
            \centerline{\includegraphics[width=3cm]{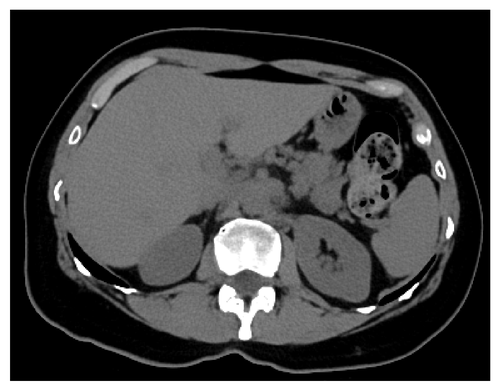}}
            \subcaption{}
            \label{tmganbl11}
        \end{subfigure}
        \\
                \begin{subfigure}[b]{3cm}
            \centerline{\includegraphics[width=3cm]{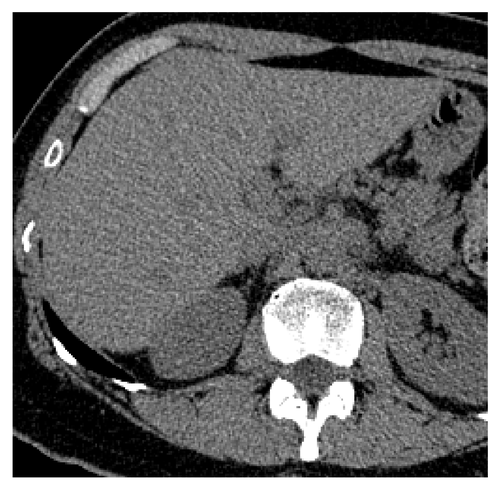}}
            \subcaption{}
            \label{in12}
        \end{subfigure}
     &%
        \begin{subfigure}[b]{3cm}
            \centerline{\includegraphics[width=3cm]{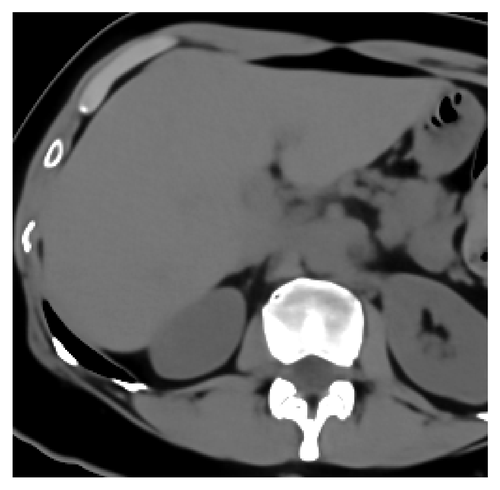}}
            \subcaption{}
            \label{mse12}
        \end{subfigure}
    &%
         \begin{subfigure}[b]{3cm}
             \centerline{\includegraphics[width=3cm]{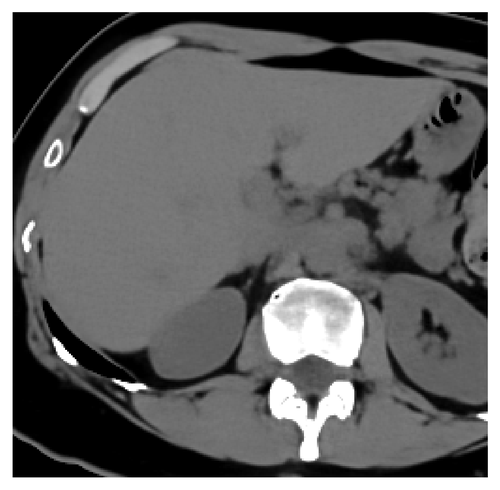}}
             \subcaption{}
             \label{br12}
         \end{subfigure}
     &%
        \begin{subfigure}[b]{3cm}
            \centerline{\includegraphics[width=3cm]{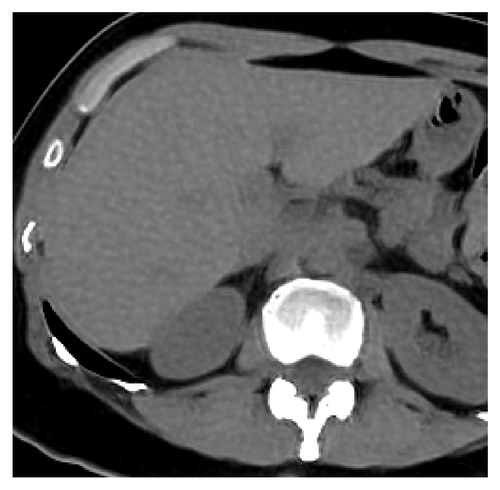}}
            \subcaption{}
            \label{wgan12}
        \end{subfigure}&
        \begin{subfigure}[b]{3cm}
            \centerline{\includegraphics[width=3cm]{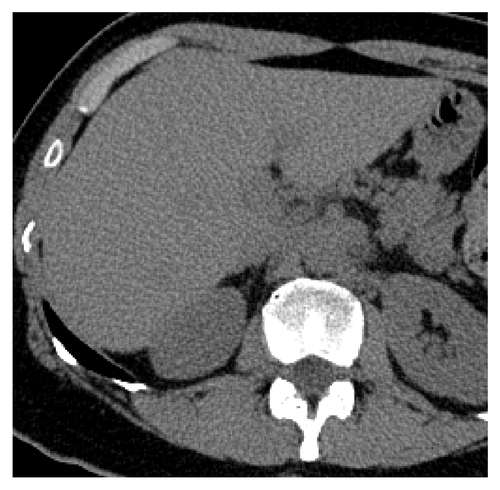}}
            \subcaption{}
            \label{tmgan12}
        \end{subfigure}&
        \begin{subfigure}[b]{3cm}
            \centerline{\includegraphics[width=3cm]{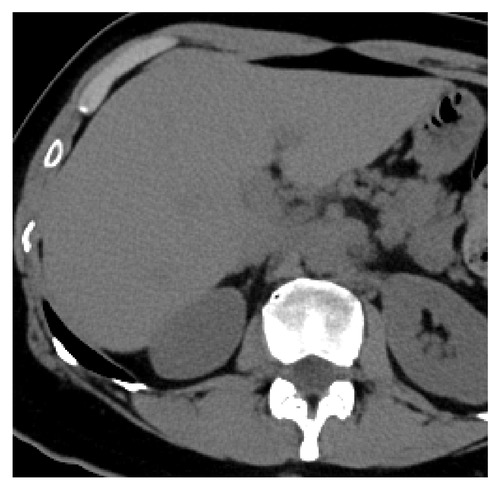}}
            \subcaption{}
            \label{tmganbl12}
        \end{subfigure}
\end{tabular}
\caption{Comparison of denoised results for Exam 4, a low-dose clinical scan. First row shows full slice. Second row shows zoomed ROI. TMGAN trained with $\lambda = 0.4, \sigma=7.8$ HU. Display window is [-125, -225] HU. TMGAN-blended recovers good detail and a uniform target texture given a challenging input.}
\label{15461real1}
\end{minipage}
\end{figure*}
\fi

\section{Results and Discussion}\label{allresults}

\subsection{Quantitative evaluation}
Fig.~\ref{figWP_example} illustrates the textures generated by TMGAN with an input image of the test water phantom using the bone+ kernel and a target texture of a water phantom using the standard kernel.
Both phantoms are reconstructed with a DFOV of 15 cm.
Fig.~\ref{figWP_example} (\subref{lam0}-\subref{lam004}) show the output of TMGAN with $\lambda$ values of 0.0, 0.01, and 0.04, respectively, while $\sigma$ was set to $50$ HU. 
All images use a [-175, 175] HU window and a 5.86 cm FOV. 
Fig.~\ref{fignps1d} shows the corresponding NPS plots for all 5 images.

\ifx\addfigs\undefined

\begin{figure*}[t!]
\begin{minipage}{0.6\textwidth} 
    \begin{tabular}{c@{}c@{}c@{}c@{}c}
     \begin{tabular}{@{}c@{}}Blurred noisy \\ input\end{tabular} &
     \begin{tabular}{@{}c@{}}MSE-\\sharpener\end{tabular} &
      \begin{tabular}{@{}c@{}}NPSF$_1$\end{tabular} &
      \begin{tabular}{@{}c@{}}NPSF$_2$\end{tabular} &
        TMGAN\\
   
                \begin{subfigure}[b]{2.3cm}
            \centerline{\includegraphics[width=2.4cm]{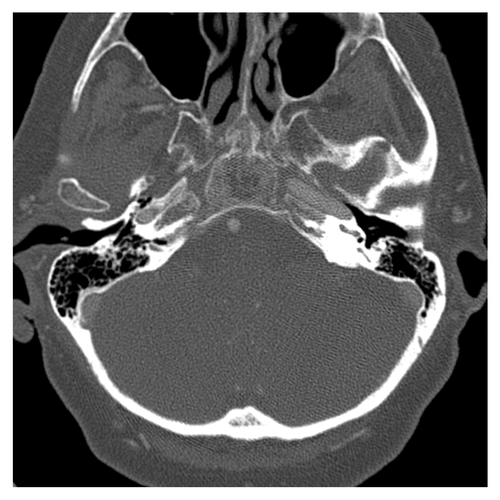}}
            \subcaption{}
            \label{in9bl}
        \end{subfigure}
     &%
        \begin{subfigure}[b]{2.3cm}
            \centerline{\includegraphics[width=2.4cm]{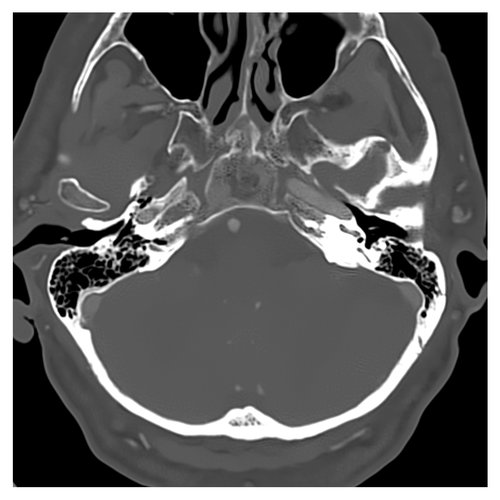}}
            \subcaption{}
            \label{mse9bl}
        \end{subfigure}
    &%
         \begin{subfigure}[b]{2.3cm}
             \centerline{\includegraphics[width=2.4cm]{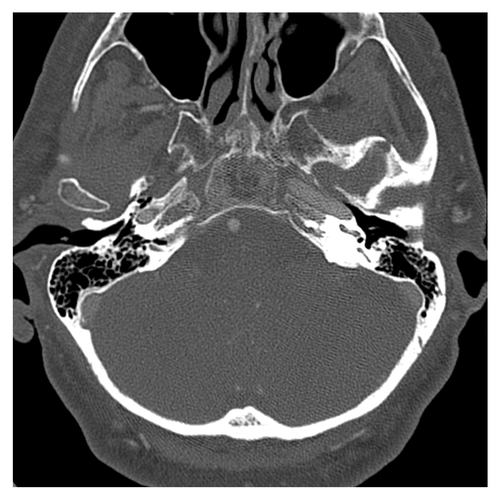}}
             \subcaption{}
             \label{br9bl}
         \end{subfigure}
     &%
        \begin{subfigure}[b]{2.3cm}
            \centerline{\includegraphics[width=2.4cm]{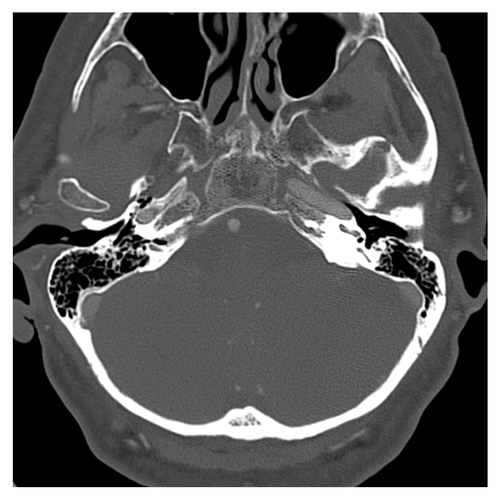}}
            \subcaption{}
            \label{gan9bl}
        \end{subfigure}
             &%
        \begin{subfigure}[b]{2.3cm}
            \centerline{\includegraphics[width=2.4cm]{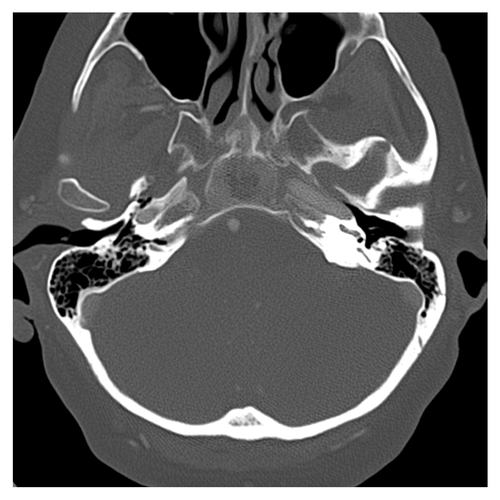}}
            \subcaption{}
            \label{tmgan9bl}
        \end{subfigure}
        \\
                \begin{subfigure}[b]{2.3cm}
            \centerline{\includegraphics[width=2.4cm]{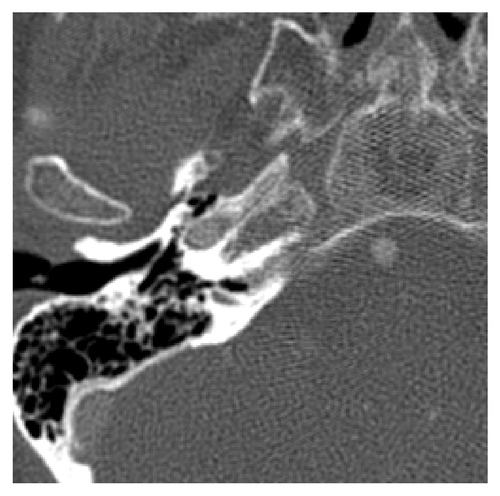}}
            \subcaption{}
            \label{in10bl}
        \end{subfigure}
     &%
        \begin{subfigure}[b]{2.3cm}
            \centerline{\includegraphics[width=2.4cm]{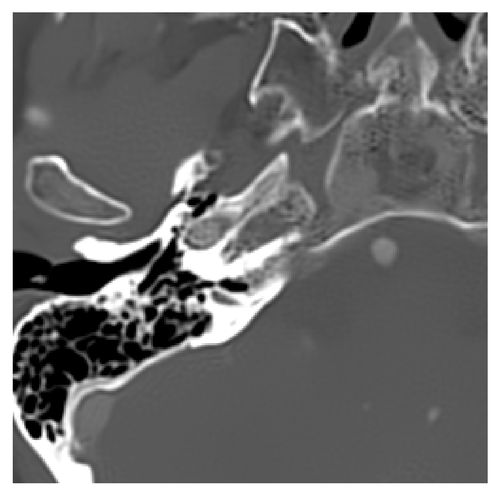}}
            \subcaption{}
            \label{mse10bl}
        \end{subfigure}
    &%
         \begin{subfigure}[b]{2.3cm}
             \centerline{\includegraphics[width=2.4cm]{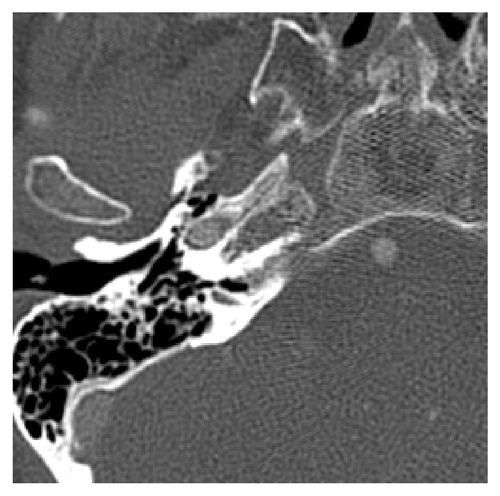}}
             \subcaption{}
             \label{br10bl}
         \end{subfigure}
     &%
        \begin{subfigure}[b]{2.3cm}
            \centerline{\includegraphics[width=2.4cm]{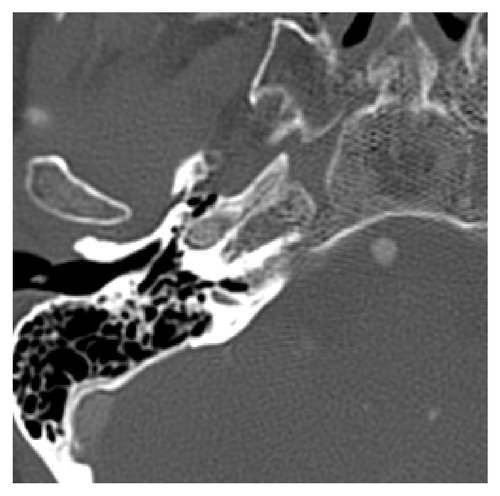}}
            \subcaption{}
            \label{gan10bl}
        \end{subfigure}
             &%
        \begin{subfigure}[b]{2.3cm}
            \centerline{\includegraphics[width=2.4cm]{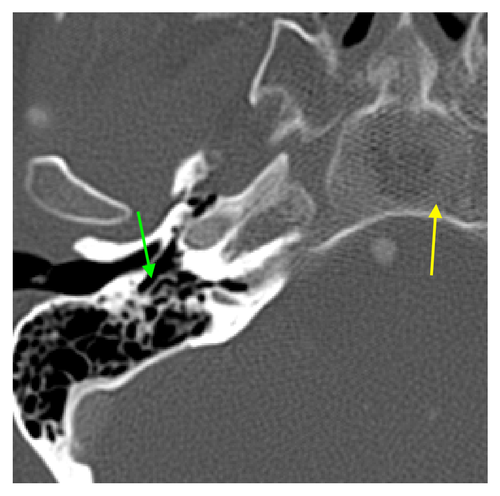}}
            \subcaption{}
            \label{tmgan10bl}
        \end{subfigure}
\end{tabular}
\caption{Comparison of sharpening results for Exam 5, a low-dose and XL focal spot scan. 
First row shows full slice. Second row shows zoomed ROI. TMGAN trained with $\lambda = 0.04, \sigma=50$ HU. 
Display window is [-650, 1350] HU. TMGAN sharpens temporal bones (green arrow), while reducing aliasing artifacts (yellow arrow).}
\label{deblurringReal}
\end{minipage}
\hspace{2em}
\begin{minipage}[t]{0.36\textwidth}
\begin{tabular}{c}
    \centering
    \includegraphics[width=0.98\linewidth]{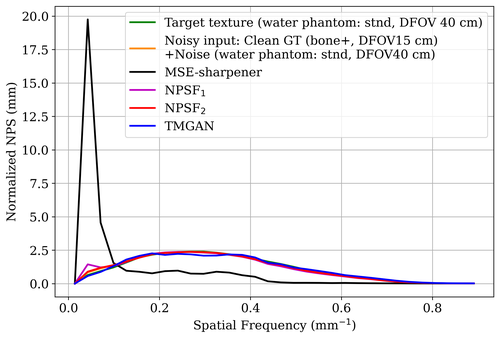}
\end{tabular}
    \vspace{0.8em}\caption{Comparison of NPS for sharpening results. 
    TMGAN and NPSF ($\beta > 0$) match the NPS of results to the target texture.}
    \label{figdeblnps}
\end{minipage}
\end{figure*}

\fi

Consistent with the role of $\lambda$ in controlling the importance of texture mapping in the generator loss function, as $\lambda$ increases, TMGAN generates an output texture that matches more closely to the target texture both qualitatively and in terms of NPS plots.
Moreover, for $\lambda =0 $, the NPS is skewed toward the lower frequencies, which is known to cause an overly-smooth or ``cartoony'' texture in CT images~\cite{dlreview}.

Table~\ref{stdfig3} lists the standard deviation of the noise for each texture shown in Fig.~\ref{figWP_example}. Again consistent with the role of $\lambda$, we see that the noise standard deviation increases as $\lambda$ increases.  This is expected since $\lambda=0$ corresponds to minimizing the bias-reduced MSE only, while increasing $\lambda$ promotes additional texture, which appears as increased noise energy.  However, even with $\lambda=0.04$ the TMGAN noise standard deviation is much less than the input noise standard deviation, indicating that TMGAN is able to simultaneously reduce noise and match texture. 

Fig.~\ref{figpsnr} shows plots of the PSNR of axial slices for each of the denoising algorithms using one of the synthetic noisy exams as an input.
The MSE-denoiser generates the highest PSNR since it minimizes the MSE loss function.
The \mbox{BR-$0.5$} algorithm recovers more detail at the cost of a slight decrease in the PSNR.
Since the GAN architectures of \mbox{WGAN-VGG} and TMGAN have a loss function that encourages texture recovery, they all have lower PSNR than the MSE-denoiser and \mbox{BR-$0.5$} methods.
Finally, PSNR for \mbox{TMGAN-blended} is still lower than \mbox{BR-$0.5$} but is higher than that of \mbox{WGAN-VGG}. 

Table~\ref{tab:quanteval} lists the PSNR and SSIM values averaged over 9 results obtained by inputting synthetic noisy exams to each algorithm.
Notice that \mbox{BR-$0.5$} has the best SSIM value. 
However, as we demonstrate next, the \mbox{TMGAN-blended} results produce much better texture with only a small decrease in the PSNR and SSIM.  

Fig.~\ref{fignpsdenoised} compares the NPS of input and target textures along with the NPS of denoised results using various algorithms (whereas Fig.~\ref{fignps1d} varied $\lambda$). Both input and target textures were obtained by reconstructing the test water phantom with the standard kernel and a DFOV of $40$ cm.
The NPS for TMGAN most closely matches the target texture, while the NPS for TMGAN-blended has slightly increased low frequencies relative to TMGAN. 
More importantly, the NPS for TMGAN-blended matches the NPS of target texture at higher frequencies more closely than all other algorithms except TMGAN.

\subsection{Qualitative evaluation with measured data}

Fig.~\ref{6653real1} and~\ref{6653real2} show the results for two separate slices of Exam 1, a high-contrast clinical scan. 
Visually, results for the MSE-denoiser have smooth texture and lack detail.
In contrast, \mbox{BR-$0.5$} has more detail than MSE-denoiser, but with very nonuniform texture. 
The \mbox{WGAN-VGG} method recovers some texture and detail; however, the texture is not uniform. 

From Fig.~\ref{6653real1}(\subref{tmgan2}) and~\ref{6653real2}(\subref{tmgan4}), TMGAN produces uniform texture, but with increased noise variance and with some details obscured by the texture.
Alternatively, \mbox{TMGAN-blended} achieves the desired uniform texture, along with reduced noise and more visible detail.
More specifically, the arrows in Fig.~\ref{6653real1}(\subref{tmganbl2}) indicate the detail recovered by \mbox{TMGAN-blended} even while maintaining the uniform target texture of TMGAN.

Fig.~\ref{9323real1} and~\ref{9323real2} show results for Exam 2, a low-contrast clinical exam. 
Both the full slice and zoomed views show that TMGAN produces a uniform texture for this low-contrast exam, while \mbox{WGAN-VGG} produces a uniform but coarser texture.
More importantly, the arrows in Fig.~\ref{9323real1}(\subref{tmganbl6}) and~\ref{9323real2}(\subref{tmganbl8}) show that low-contrast features are best detected using the \mbox{TMGAN-blended} results. 

Fig.~\ref{15107real1} shows results for Exam 3 in the lungs. 
Note that the small air pockets in the lungs, shown by the yellow arrows in Fig.~\ref{15107real1}(\subref{tmganbl10}), have diagnostic value. 
These air pockets are not clearly visible in the MSE-denoiser results. 
On the other hand, \mbox{WGAN-VGG}, \mbox{BR-$0.5$}, TMGAN and \mbox{TMGAN-blended} recover them in the denoised images.

Fig.~\ref{15461real1} shows the results of Exam 4, which is a challenging exam due to the very low contrast.
As seen in Fig.~\ref{15461real1}(\subref{tmganbl12}), \mbox{TMGAN-blended} produces the target texture and recovers most of the detail seen in \mbox{WGAN-VGG}, which produces a less desirable texture. 

Fig.~\ref{deblurringReal} shows TMGAN results for sharpening an image consisting of noise and aliasing artifacts. 
From Fig.~\ref{deblurringReal}, it is evident that the MSE-sharpener results are over-smooth and contain artifacts.
While the NPSF$_1$ has more detail and texture, it retains some aliasing artifacts. If we tune NPSF to have noise power the same as the TMGAN results to get NPSF$_2$ then, there is partial noise reduction and sharpening. 
However, the aliasing artifacts look worse in this case. 
In comparison, the TMGAN results are sharper than the input (green arrow) and have more uniform texture with a lower noise level than NPSF$_1$. 
Importantly, the TMGAN results have reduced aliasing artifacts while remaining sharp as indicated by the yellow arrow.

Fig.~\ref{figdeblnps} compares the NPS for a sharpened phantom scan with uniform areas using the algorithms discussed here. 
The NPS for the MSE-sharpener is skewed towards low frequencies producing the over-smooth texture observed in Fig.~\ref{deblurringReal}. 
The NPS for NPSF and TMGAN results are very similar, and they match with the target texture.

\section{Conclusion}
\label{conclusion}

We proposed a novel neural network, TMGAN, that denoises and/or sharpens CT images while simultaneously matching the texture of the resulting output to a target texture.
We achieve this using a branched network with identical weights in each branch.
Each branch processes the ground truth corrupted by noise, with the noise realization independent in the two branches.
By taking the difference of the resulting outputs, our network separates texture from image with anatomical detail.
By embedding this network in an adversarial training framework, we train to produce a desired texture layered on top of a clean image. The resulting output is an enhanced CT image that contains important physiological details and maintains a texture that is viewed as desirable by practicing radiologists.
Our method reduces the risk of hallucination by separating the clean CT image containing anatomical features from texture and restricting generation to the texture part of the image.
Furthermore, the bias-variance trade-off can be modulated as desired by using a simple blending method. Our experiments show that TMGAN removes streaking or aliasing artifacts and produces uniform texture while maintaining important detail.

\section{Acknowledgment}
We thank Jonathan S.~Maltz, Roman Melnyk, Brian Nett, and Ken D.~Sauer for fruitful discussions and Karen Procknow for her valuable clinical feedback.

\bibliographystyle{IEEEbib}
\bibliography{refs}
\end{document}